\newcommand\ketbra[2][]{%
  \def\ketbra@firstarg{#1}%
  \def\ketbra@secondarg{#2}%
  \ifx\ketbra@firstarg\empty%
    \left\lvert\ketbra@secondarg\middle\rangle \! \middle\langle \ketbra@secondarg\right\rvert%
  \else%
    \left\lvert\ketbra@firstarg\middle\rangle \! \middle\langle\ketbra@secondarg\right\rvert%
  \fi%
}
\def\metadef#1#2{%
% #1 csname
% #2 content
  \def\metadef@iter##1{\ifx##1;\else \expandafter\newcommand\csname#1\endcsname{#2}\expandafter\metadef@iter\fi}%
  \expandafter\metadef@iter%
}
\newtheorem{theorem}{Theorem}
\numberwithin{theorem}{section}
\newtheorem{claim}       [theorem] {Claim}
\newtheorem{lemma}       [theorem] {Lemma}
\newtheorem{definition}  [theorem] {Definition}
\newtheorem*{theorem*}{Theorem}
\DeclareMathOperator{\Var}{Var}
\newcommand{\eps}{\varepsilon}
\newcommand{\poly}{\mathsf{poly}}
\newcommand{\negl}{\mathsf{negl}}
\newcommand{\randfrom}{\xleftarrow{\$}}
\newcommand{\U}{\mathcal{U}}
\newcommand{\E}{\mathop{\mathbb{E}}}
\newcommand{\wt}[1]{\widetilde{#1}}
\newcommand{\floor}[1]{\left\lfloor#1\right\rfloor}
\newcommand{\abs}[1]{\left|#1\right|}
\renewcommand{\set}[1]{\left\{#1\right\}}
\def\sseq{\subseteq}
\DeclarePairedDelimiter\ceil{\lceil}{\rceil}
\newcommand{\N}{\mathbb{N}}
\newcommand{\cala}{{\mathcal A}}
\newcommand{\cald}{{\mathcal D}}
\newcommand{\calh}{{\mathcal H}}
\newcommand{\calo}{{\mathcal O}}
\newcommand{\calu}{{\mathcal U}}
\newcommand{\setbrac}[1]{\left\{#1\right\}}
\newcommand{\blt}{\setbrac{0,1}}
\newcommand{\bbn}{{\mathbb N}}
\newcommand\ExpOwsg[4][]{%
  \def\ExpOwsg@firstarg{#1}%
  \ifx\ExpOwsg@firstarg\empty%
    \mathsf{Exp}_{#2,#3}(#4)
  \else%
    \mathsf{Exp}_{#2,#3,#1}(#4)
  \fi%
}
\def\Exp{\mathop{{}\mathbb{E}}}
\renewcommand{\P}{\mathsf{P}}
\newcommand{\NP}{\mathsf{NP}}
\newcommand{\PP}{\mathsf{PP}}
\newcommand{\BPP}{\mathsf{BPP}}
\newcommand{\SZK}{\mathsf{SZK}}
\newcommand{\BQP}{\mathsf{BQP}}
\newcommand{\QCMA}{\mathsf{QCMA}}
\newcommand{\QMA}{\mathsf{QMA}}
\newcommand{\coRQP}{\mathsf{coRQP}}
\newcommand{\A}{\mathcal{A}}
\newcommand{\Samp}{\mathsf{Samp}}
\newcommand{\Ver}{\mathsf{Ver}}
\newcommand{\OWP}{\mathsf{OWPuzz}}
\newcommand{\PRS}{\mathsf{PRS}}
\newcommand{\OWSG}{\mathsf{OWSG}}
\newcommand{\distOWP}{\mathsf{distOWPuzz}}
\newcommand{\K}{\mathsf{K}}
\newcommand{\KT}{\mathsf{KT}}
\newcommand{\Kt}{\mathsf{Kt}}
\newcommand{\Kkt}{\mathsf{K}^t}
\newcommand{\MCSP}{\mathsf{MCSP}}
\newcommand{\MQCSP}{\mathsf{MQCSP}}
\newcommand{\GapMQCSP}{\mathsf{GapMQCSP}}
\newcommand{\UMCSP}{\mathsf{UMCSP}}
\newcommand{\SMCSP}{\mathsf{SMCSP}}
\newcommand{\GapK}{\mathsf{GapK}}
\newcommand{\GapMCSP}{\mathsf{GapMCSP}}
\newcommand{\Kq}{\mathsf{Kq}}
\newcommand{\QC}{\mathsf{QC}}
\newcommand{\KH}{\mathsf{H}}
\newcommand{\GapKq}{\mathsf{GapKq}}
\newcommand{\GapQC}{\mathsf{GapQC}}
\newcommand{\GapH}{\mathsf{GapH}}
\newcommand{\flws}{\randfrom}
\newcommand{\SD}{\mathsf{SD}}
\newcommand{\D}{\mathcal{D}}
\definecolor{corlinks}{RGB}{200,0,0}
\definecolor{cormenu}{RGB}{200,0,0}
\definecolor{corurl}{RGB}{200,0,0}
\title{A Meta-Complexity Characterization of Quantum Cryptography}
\author{
    Bruno P. Cavalar\thanks{Email: \texttt{bruno.cavalar@cs.ox.ac.uk}}
    \vspace{0.2cm}
    % \\{\small Department of Computer Science}
    \\{\small University of Oxford\vspace{0.3cm}}
    \and
    Eli Goldin\thanks{Email: \texttt{eli.goldin@nyu.edu}}
    \vspace{0.2cm}
    % \\{\small Department of Computer Science}
    \\{\small New York University\vspace{0.3cm}}
    \and
    Matthew Gray\thanks{Email: \texttt{matthew.gray@magd.ox.ac.uk}}
    \vspace{0.2cm}
    % \\{\small Department of Computer Science}
    \\{\small University of Oxford\vspace{0.3cm}}
    \and
    Peter Hall\thanks{Email: \texttt{pf2184@nyu.edu}}
    \vspace{0.2cm}
    % \\{\small Department of Computer Science}
    \\{\small New York University\vspace{0.3cm}}
}
\begin{document}

\maketitle

\begin{abstract}
    We prove the first meta-complexity characterization of a quantum
    cryptographic
    primitive. We show that 
    one-way puzzles
    exist if and only if there is some
    quantum samplable distribution 
    of binary strings
    over which it is hard to approximate
    Kolmogorov complexity.
    Therefore, we characterize one-way puzzles by the average-case hardness
    of a \emph{uncomputable} problem.
    This brings to the quantum setting
    a recent line of work that characterizes
    classical cryptography with the average-case hardness of a meta-complexity
    problem, initiated by Liu and Pass.
    Moreover, since the average-case hardness of Kolmogorov complexity
    over \emph{classically} polynomial-time samplable distributions
    characterizes one-way functions,
    this result poses one-way puzzles as a natural generalization 
    of one-way functions to the quantum setting.
    Furthermore, our equivalence 
    goes through probability
    estimation, giving us the additional equivalence that one-way puzzles
    exist if and
    only if there is a quantum samplable distribution over which
    probability estimation is hard. 
    We also
    observe that 
    the oracle worlds of defined by Kretschmer et. al.
    rule out any relativizing characterization
    of 
    one-way puzzles
    by the hardness of a problem in $\NP$ or $\QMA$,
    which means that 
    it may not be possible with current techniques
    to characterize one-way puzzles with another meta-complexity problem.
\end{abstract}

\pagebreak
\tableofcontents
\pagebreak
\section{Introduction}

% Cryptography and complexity theory are two sides of the same coin:
% computational hardness. Complexity theory categorizes problems by how
% difficult they are to solve. Cryptography theory shows how hard problems
% can be used to our advantage. It seems natural that we should expect a deep
% connection between cryptographic primitives and complexity-theoretic
% hardness conjectures. 

What is the minimal complexity-theoretic assumption required for
\emph{quantum}
cryptography?
% This fundamental question began to be understood only recently in the case
% of classical cryptography.
This is a fundamental question which even in the classical case began to be
understood only recently.
\emph{One-way functions (OWFs)}
are a basic, minimal primitive in classical cryptography,
being necessary for a wide variety of cryptographic tasks,
as well as being equivalent to many
others~\cite{HILL99,impagliazzo1989one,FOCS:GolGolMic84,impagliazzo1995personal}.
%
% Some connections do in fact exist. 
It is easy to observe that if one-way
functions exist 
% (OWFs, the most basic cryptographic assumption),
then
$\P\neq \NP$ (the most fundamental complexity-theoretic conjecture).
Furthermore, there is a long tradition of building cryptographic primitives
from the computational hardness of \textit{specific} problems, e.g.
factoring, Learning-with-Errors, etc.~\cite{RivShaAdl78,C:BFKL93}. 
However, these
implications only go in a single direction. 
%\mnote{Add in that universal OWFs exist but are unnatural.}
% It would be very nice to have a
% result along the lines of "cryptography exists if and only if some
% complexity-theoretic conjecture holds." 

The ``holy grail'' result of this form~\cite{passholygrail} would
be to show that one-way functions exist if and only if $\P \neq \NP$. Note
that one-way functions are the ``central'' classic cryptographic primitive. 
As remarked above, one way
functions are minimal --- almost all classical modern cryptography implies
the existence of OWFs, mostly through trivial reductions. Moreover, one way
functions are useful --- they can be used to build everything in the
``crypto-complexity'' class Minicrypt \cite{impagliazzo1995personal} including
commitments, pseudorandom generators, and secret key cryptography.
Similarly, $\P \neq \NP$ is the central conjecture in complexity theory. If it
is shown to be false, much of complexity theory becomes trivial. Thus, if the holy grail result is true, then it
implies the more general claim that 
``classical cryptography is possible if and only
if complexity theory is interesting.''

For many years, researchers have tried and failed to achieve this holy
grail~\cite{FOCS:BogTre03,STOC:AGGM10} \cite{TCC:BogBrz15}. However, some progress
has come in recent years from a surprising 
% direction: 
front:
meta-complexity. 

\paragraph{Meta-complexity characterizations of cryptography.}

Meta-complexity 
% is the study of 
refers to
computational problems which are themselves
about computational complexity. The foremost example of a meta-complexity
problem is the task of computing the Kolmogorov complexity $\K(x)$ of a string
$x$, which is defined as 
the length of the shortest Turing machine outputting 
% that string\todo{cite}. 
$x$.
The theory of meta-complexity is known to have
numerous applications~\cite{DBLP:series/txcs/LiV19},
including some recent discoveries in learning algorithms, hardness
magnification, pseudorandomness and worst-case to average-case
reductions~\cite{DBLP:conf/coco/CarmosinoIKK16,STOC:Hirahara21,santhanam2020pseudorandomness,ITCS:CHOPRS20}.

In a recent breakthrough, Liu and Pass~\cite{liu2020one}
have shown that OWFs exist if and only the
if problem of
computing
$\Kkt$
(i.e., 
    % given $x$, compute 
    the length of the shortest program that outputs $x$ in
$t(\abs{x})$ steps)
is
\emph{weakly hard on average} over the uniform distribution
for some polynomial $t$.
Later works have shown 
similar characterizations of OWFs
from the hardness of various other meta-complexity problems \cite{C:LiuPas23,DBLP:conf/stoc/LiuP21,santhanam2020pseudorandomness,ilango2021hardness,hirahara2023capturing,hirahara2023duality}.
% In the past few years, a number of results have shown
% that 
% the hardness of various meta-complexity problems characterizes the
% existence of one-way functions\todo{cite}. 
% For the purposes of this work,
% we will focus on~\cite{ilango2021hardness}, which shows that one-way
% functions exist 
%if and only if there exists a samplable distribution on
% which Kolmogorov complexity is hard to approximate.
%
%
%
% A benefit of characterizations with meta-complexity
% These results have the benefit of
% % highlighting
% revealing
% a deep connection 
These results reveal both
a deep connection
between cryptography and 
% the theory 
meta-complexity, 
and 
% serve use 
% use
OWFs role as a central node equating many different meta-complexity tasks and properties.
% which has numerous applications~\todo{cite Li-Vitanyi book},
% including some recent discoveries in learning algorithms, hardness
% magnification, pseudorandomness, worst-case to average-case
% reductions~\todo{cite}.

Of particular interest to this work is~\cite{ilango2021hardness},
which shows that one-way functions exist
if and only if there exists a \emph{samplable distribution} on
which Kolmogorov complexity is hard to approximate.
This result has the advantage of applying not only to the uniform
distribution
but to any distribution that is samplable by a polynomial-time algorithm,
a more natural setting from the perspective of the theory of average-case
complexity~\cite{DBLP:journals/fttcs/BogdanovT06}.
% A surprising aspect of this result is that Kolmogorov complexity is known
% to \emph{uncomputable}.

% \begin{theorem}
%     The following are equivalent:
%     \begin{itemize}
%         \item One-way functions exist.
%         % \item There exists some quantum samplable distribution on which
%         %     constant factor probability estimation is hard on average.
%         \item There exists a polynomial-time samplable distribution $\D$ over classical
%             strings such that for $\D \to x$ it is hard to decide whether
%             $K(x) \leq s$ or $K(x) \geq s + 
%             % \omega(\log \secpa)$.
%             \omega(\log n)$.
%             \bnote{The previous version was not compilling here.}
%     \end{itemize}    
% \end{theorem}

% \bnote{Either state IRS result here or leave it for the Results section.}

\paragraph{The complications of the quantum setting.}

% However, this focus 
% Despite the central role that OWFs play in classical cryptography as
% discussed above,
% The central role of
The above focus on 
OWFs as the minimal assumption for computational
cryptography has been complicated by recent work in quantum cryptography.
Starting with Ji, Liu, and Song's introduction of pseudorandom states (PRS)~\cite{ji2018pseudorandom},
recent works~\cite{C:MorYam24,EPRINT:MorYam22c,C:AnaQiaYue22} have defined a new
suite of quantum cryptographic primitives, most of which appeared to be
less powerful than OWFs. Kretschmer formalized this by introducing an
oracle world in which PRSs (and the many primitives which they imply) exist,
but $\BQP=\QCMA$ \cite{Kretschmer21} meaning that no post-quantum OWFs are
possible\footnote{Post-quantum OWFs are OWFs secure against quantum
    adversaries. The paper~\cite{KQST23} similarly shows a classical oracle
under which PRSs exist but $\P=\NP$.}. 

Thus, in the quantum setting, the classical holy grail result is no longer
relevant. It is possible that $\P = \NP$, but useful cryptography exists. This
brings up the question

\begin{center}
    \textit{What is the minimal natural complexity assumption required for quantum cryptography?}
\end{center}

To answer this question, 
% the first thing to look at are 
% it is worthwhile looking at
% the minimal primitives for quantum cryptography. 
it would be worthwhile to look at
the minimal primitive for quantum cryptography. 
However, even this is not clear in the quantum setting.
Though one-way functions play this role in classical 
cryptography,
and indeed all the cryptographic primitives of Minicrypt are known to be equivalent to it~\cite{impagliazzo1995personal},
%\mnote{Odd phrasing here feels like we're introduciong minicrypt for the first time}
the cryptographic primitives that follow from PRSs
(the collection of which some have called
\emph{Microcrypt}~\cite{morimaetalk})
are not known to be equivalent to each other, nor is there
an unequivocably minimal primitive underlying all others.

% To answer this question, 
% it is worthwhile looking at
% the minimal primitives for quantum cryptography. 
Currently the two weakest 
candidate
primitives
% of interest 
are quantum bit commitments (EFI)
\cite{brakerski2022computational,yan2022general} which seem to be minimal
for settings with quantum communication, and one way puzzles (OWPuzz) which
seem to be minimal in the quantum-computation classical-communication
(QCCC) \cite{STOC:KhuTom24,chung2024central} setting.
% \enote{not sure what the Austrin paper has to do with anything}.
One-way puzzles can also be
constructed from a number of important quantum output primitives, such as
pseudorandom states and one-way state generators~\cite{STOC:KhuTom24,cavalar2023computation}. Notably, the existence of OWPuzz implies that 
$\BQP \neq \PP$~\cite{cavalar2023computation}. On the other hand, there are
barriers to showing that the existence of EFI implies any conjectures in
complexity theory~\cite{STOC:LomMaWri24}.
% \bnote{Eli or Matthew: perhaps you want to say more about why OWPuzz are
% important in QCCC?}

% In this work, we will focus on the 
% % seemingly easier 
% question of 
% % developing a characterization for one-way puzzles. 
% % studying 
% determining
% the complexity-theoretic assumptions that underpin the existence
% of one-way puzzles.

In this work, we will 
characterize one-way puzzles with the average-case hardness of estimating
Kolmogorov complexity.
Our results will naturally generalize the characterizations
of~\cite{ilango2021hardness},
% while strengthening the case for 
% showcasing
corroborating
the centrality of one-way puzzles in
quantum cryptography,
% This also 
and
providing the first complexity-theoretic characterization of
a quantum cryptographic primitive.
% based on the hardness of a natural computational problem.
Furthermore, we will argue that this is in some sense
the only possible 
characterization of
% existing quantum cryptography.
one-way puzzles 
based on the hardness of a meta-complexity problem.

% \subsection{A characterization of one-way puzzles}
\subsection{A Characterization of One-way Puzzles}
\label{sec:results}

Our 
main 
result is an equivalence between the existence of one-way puzzles
and the average-case hardness of estimating Kolmorogov complexity over
quantum samplable distributions.
An interesting aspect of this result is that, though Kolmogorov complexity is known
to be \emph{uncomputable}~\cite{DBLP:series/txcs/LiV19},
it nevertheless
characterizes cryptography in the average-case setting.
% We begin by recalling the result of Ilango, Ren and
% Santhanam~\cite{ilango2021hardness}.
 
Let $\GapK[s,t]$ be the promise problem of distinguishing between
binary strings of Kolmogorov complexity at most $s$
and those of Kolmogorov complexity at least $t$.
A problem is said to be 
\emph{weakly average-case hard}
on a distribution $\cald$ if,
for all sufficiently large $n$,
every 
%(quantum) 
%probabilistic 
quantum
polynomial-time algorithm
% fails to solve it with probability $n^{-O(1)}$ 
% solves it 
gives the wrong answer with probability at least $n^{-O(1)}$.

We now recall the definition of one-way puzzles~\cite{STOC:KhuTom24}. 
A one-way puzzle is defined by a
quantum sampler algorithm, which samples a puzzle and a corresponding key,
as well as an inefficient verification algorithm. The puzzle and key for a
OWPuzz must be classical strings. The puzzle should be easy to sample, but
hard to solve. That is, the sampled key-and-puzzle pairs should pass verification,
but given just the puzzle it should be hard to find a key which 
% verifies.
passes the test.
We show that, just like with OWFs, one-way puzzles can be characterized by the
average-case hardness of 
% a meta-complexity problem. 
computing $\GapK$.

\begin{theorem}
    \label{t:main-result-owpuzz}
    The following are equivalent:
    \begin{enumerate}
        \item One way puzzles exist.
        \item 
        There exists a quantum samplable distribution $\D$ on which
        probability estimation (\Cref{def:probest}) is weakly hard on average.
            % There exists a poly time quantum samplable distribution $\D$ such that for all constants $c > 1$, recovering $p(x)$ s.t. $\Pr[x \sim D] < p(x) < c\Pr[x \sim D]$ is weakly hard on average on $\D$.
% \mnote{How's this?}
% \bnote{Feel free to use the language of easy on average now. I updated the definition.}
        \item
            For some $s = n^{\Omega(1)}$ and $\Delta = \omega(\log n)$,
            there exists a quantum samplable distribution $\cald$ such that
            $\GapK[s,s+\Delta]$ is quantum weakly average-case hard on $\cald$.
        % \item There is a quantum samplable distribution $\D$ over classical
        %     strings such that for $\D \to x$ it is hard to decide whether
        %     $K(x) \leq s$ or $K(x) \geq s + 
        %     % \omega(\log \secpa)$.
        %     \omega(\log n)$.
        %     \bnote{The previous version was not compilling here.}
    \end{enumerate}    
\end{theorem}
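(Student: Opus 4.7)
Guided by the authors' remark that the equivalence ``goes through probability estimation,'' I would organize the argument around two pairs, $(1) \Leftrightarrow (2)$ and $(2) \Leftrightarrow (3)$, with probability estimation as the pivot.

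For $(1) \Rightarrow (2)$, given a one-way puzzle $(\Samp, \Ver)$ let $\cald$ be the joint distribution over key-puzzle pairs $(k,p)$ produced by $\Samp$; this is quantum samplable by construction. I would argue by contrapositive: if a quantum polynomial-time adversary estimates $\cald(k,p)$ to inverse-polynomial accuracy with nonnegligible probability, then on a challenge puzzle $p$ we can invert by using the estimator as an approximate projector onto valid keys. Concretely, coherently run $\Samp$, mark outputs whose puzzle component equals $p$ and whose estimated $\cald$-weight is non-negligible, and amplitude-amplify; this yields some $k$ with $\Ver(k,p)=1$, contradicting the security of the puzzle. For $(2) \Rightarrow (1)$, given a quantum samplable $\cald$ on which probability estimation is hard, define the OWPuzz sampler to emit puzzle $p = x \sim \cald$ together with a sufficiently-fine-grained approximation of $-\log \cald(x)$ as the key $k$; the (inefficient) verifier checks that $k$ matches $-\log \cald(p)$ to the intended tolerance. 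An inverter is then essentially a probability estimator, so security follows.

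The equivalence $(2) \Leftrightarrow (3)$ rests on the coding theorem: for any samplable $\cald$ and typical $x \sim \cald$, $K(x) = -\log \cald(x) \pm O(\log n)$. For $(3) \Rightarrow (2)$, any probability-estimation algorithm on $\cald$ yields an estimate of $-\log \cald(x)$ and hence of $K(x)$, which decides $\GapK[s, s+\Delta]$ whenever $\Delta = \omega(\log n)$, so the contrapositive gives $\GapK$-hardness. For $(2) \Rightarrow (3)$, if $\GapK[s,s+\Delta]$ is easy on $\cald$ for \emph{every} valid choice of $s$ in the specified range, then a binary search over $s$ approximates $K(x)$ to additive error $O(\log n)$, and the coding theorem again converts this into a probability estimate, contradicting the assumed hardness. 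Choosing $s = n^{\Omega(1)}$ near the median of $-\log \cald(\cdot)$ ensures both sides of the gap carry nonnegligible mass.

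I expect the main obstacle to be the $(2) \Rightarrow (1)$ direction. The OWPuzz sampler must emit a \emph{classical} numerical key tightly encoding $-\log \cald(x)$, while having only quantum-sample access to $\cald$; coherent garbage in the $\cald$-sampler, together with the fact that $-\log \cald(x)$ is itself hard to compute exactly, rules out the naive construction. I would navigate this by first building a weaker primitive (e.g.\ an EV-OWPuzz whose verifier needs only the sampling randomness) and then invoking its equivalence with OWPuzz from~\cite{STOC:KhuTom24,chung2024central}, combined with standard quantum phase-estimation or rejection-sampling tricks to extract an approximate $-\log \cald(x)$ during sampling.
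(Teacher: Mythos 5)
Your pairwise decomposition $(1)\Leftrightarrow(2)$, $(2)\Leftrightarrow(3)$ forces you to prove $(1)\Rightarrow(2)$ and $(2)\Rightarrow(3)$ directly, and both of these direct arguments have genuine gaps; the paper never proves either, instead closing the cycle $(1)\Rightarrow(3)\Rightarrow(2)\Rightarrow(1)$. Your $(1)\Rightarrow(2)$ fails because the key--puzzle joint distribution of a one-way puzzle need not be hard for probability estimation at all: if the sampler's output distribution is (nearly) flat, every pair's probability is trivially estimable while inversion can remain hard, so no reduction from inversion to estimation \emph{on that distribution} can be sound. The proposed inverter is also not efficient: ``mark outputs whose puzzle component equals $p$ and amplitude-amplify'' costs on the order of $\Pr[\text{puzzle}=p]^{-1/2}$, which is exponential in general, and a probability estimate is a number, not a means of sampling from the conditional distribution of keys. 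Your $(2)\Rightarrow(3)$ has a quantitative gap: binary search against $\GapK[s,s+\Delta]$ solvers pins down $\K(x)$ only to additive error $\Delta=\omega(\log n)$, so the coding theorem yields an estimate of $p_x$ only up to a multiplicative $2^{\omega(\log n)}$ factor; this does not contradict the paper's notion of weak hardness of probability estimation, which only rules out estimators with polynomial ($n^{-O(q)}$) precision. (There are also quantifier-alignment issues, since easiness of $\GapK$ is only ``infinitely often'' separately for each choice of $s$.)

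The direction you share with the paper, $(2)\Rightarrow(1)$, is where its real technical work lies, and your sketch leaves exactly that work open. You correctly identify the obstacle---the sampler cannot emit an approximation of $-\log\Pr[\cald\to x]$ as a classical key---but the suggested repair (an EV-style puzzle plus phase estimation or rejection sampling) is not a construction: there is no known way to extract $-\log p_x$ during sampling, and a quantum sampler has no classical randomness for a verifier to inspect. The paper sidesteps this entirely by building a candidate \emph{distributional} one-way puzzle with key $x\sim\cald$ and puzzle $(k,h,h(x))$ for a $3$-wise independent hash, invoking the theorem of Chung--Goldin--Gray that distributional one-way puzzles imply one-way puzzles, and showing via a Chebyshev/hashing analysis over all hash lengths $k$ that a distributional inverter yields a $\poly(n)$-multiplicative estimate of $p_x$. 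Finally, the leg $(1)\Rightarrow(3)$, which your plan would obtain by composing the two broken implications, is proved in the paper by an idea you would still need: from one-way puzzles one gets (again from Chung--Goldin--Gray) a sampler with $O(\log n)$ advice that is computationally indistinguishable from uniform yet has entropy at most $m-n$; mixing over random advice and with the uniform distribution gives a uniformly quantum-samplable distribution on which any average-case $\GapK$ solver would distinguish the low-Kolmogorov-complexity pseudorandom outputs from truly random strings, breaking indistinguishability. Note also that the coding theorem you invoke for quantum samplable distributions ($\K(x)\le -\log p_x + O(\log n)$) is itself something the paper has to prove, via time-unbounded classical simulation of the quantum sampler, and only its upper-bound direction holds for all $x$; the matching lower bound holds merely for typical samples by a counting argument.
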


Interestingly, the only difference between our hard problem and the
hard problem of~\cite{ilango2021hardness} characterizing OWFs is that our
hardness is over quantum samplable distributions,
whereas theirs is over
classically samplable distributions. 
Indeed, 
% Theorem 1 of~\cite{ilango2021hardness} states,
one of their main results state that replacing every reference
to \emph{quantum}
by \emph{probabilistic} in Items 2 and 3 of our~\Cref{t:main-result-owpuzz}
yields a characterization of OWFs. Consequently, our result gives an exact characterization of Microcrypt from meta-complexity: we live in Microcrypt if and only if 1) all classical distributions are easy to estimate Kolmogorov complexity on, and 2) there are quantum distributions over which estimating Kolmogorov complexity is hard.
% changing Item 3 by replacing every reference to \emph{quantum} by
% \emph{probabilistic},
% that OWFs exists if and only if Item 3 holds 

\paragraph*{On the central importance of one-way puzzles.}
One way puzzles were introduced by Khurana and Tomer
\cite{STOC:KhuTom24} as an intermediate primitive between Morimae
and Yamakawa's one way state generator (OWSG) \cite{morimae2022quantum} and
EFI pairs, and were studied as an object of independent interest by Chung,
Goldin, and Gray \cite{chung2024central}. They fall near the base of the
current hierarchy of quantum cryptographic primitives (only EFI is believed
to be a weaker primitive) and have several properties desirable for a
cryptographic primitive including combiners, a universal construction, and
hardness amplification. 

Prior work shows that a large number of quantum primitives (pseudorandom
states, one-way state generators, and all of QCCC cryptography) can be used
to build one-way puzzles~\cite{STOC:KhuTom24,chung2024central}. Furthermore,
one-way puzzles can be used to build EFI pairs, as well as everything else
equivalent to them, such as multi-party computation and quantum bit
commitments~\cite{STOC:KhuTom24}. 
However, 
% one-way puzzles are not useful on their own,
we do not yet know of any efficient cryptographic primitive that can be built using
one-way puzzles
that cannot also be built using EFI pairs\footnote{Concurrent work shows that one-way puzzles can be used to build inefficiently-verifiable proofs of quantumness, an inefficient yet important primitive~\cite{morimae2024cryptographiccharacterizationquantumadvantage}}.
% and 
In fact,
there are barriers to using one-way puzzles to build a number of primitives
other than EFI pairs~\cite{chung2024central}. 
Thus, 
% it is not immediately
% clear how important it is to study OWPuzz directly.
the place of one-way puzzles in quantum cryptography is not yet
well-understood.

One way to view our result is as giving more evidence that one-way puzzles are
valuable to study as an independent primitive. 
Indeed, 
since \Cref{t:main-result-owpuzz} is 
the natural quantum generalization of a characterization of one-way
functions,
our result can be interpreted as showing that one-way puzzles
are the natural quantum generalization of one-way functions.
Furthermore,
given the central role that the hardness of meta-complexity plays in classical
cryptography,
% this result
% our result 
one can read this result
as saying that one-way puzzles embody some
fundamental aspect of quantum cryptographic hardness.

% since hardness of Kolmogorov complexity characterizes OWPuzz
% specifically, 

% Even forgetting about
% OWPuzz, our result gives a complexity theoretic characterization in the
% following sense. 

% Finally, using the known relationships between existing cryptographic
% primitives and one-way
% puzzles~\cite{khurana2024commitments,cavalar2023computation,chung2024central},
% our result
% % We 
% shows that if estimating Kolmogorov complexity is hard,
% then some important quantum primitives exist (EFI/multi-party computation).
% Furthermore, if other important quantum primitives exist (pseudorandom
% states/one-way state generators) then estimating Kolmogorov complexity is
% hard. 
% We can thus read 
% our result as saying that OWPuzz embody some
% fundamental aspect of quantum cryptographic hardness.
% \bnote{I'm afraid this final paragraph sounds a bit sophistic.}

\paragraph*{Other meta-complexity characterizations are difficult.}
Recall that there exists an oracle relative to which OWPuzz exist but 
$\BQP = \QMA$. 
This means that any characterization of OWPuzz by a problem in $\QMA$ must
by necessity be non-relativizing. However, most meta-complexity
characterizations of OWFs are based on problems in $\NP$ \textit{do} relativize.

In particular, other meta-complexity problems known to characterize OWFs
can easily be seen to be in $\NP$ or $\QMA$. 
Examples include the (classical or quantum) minimum circuit size
problem as well as time-bounded Kolmogorov
complexity~\cite{liu2020one,chia2021quantum}.
Thus, the paper of Kretschmer~\cite{Kretschmer21} 
provides a concrete barrier to showing that
variants of these problems characterize one-way puzzles. 
Concretely, any such characterization must avoid the relativizing
techniques usually employed in meta-complexity, or use one of the handful of problems (such as $\K^{exp}$ or $\Kt$) which do not have efficiently checkable witnesses.
% That is, our
%While several meta-complexity problems lie outside 
%Therefore, our
%characterization of quantum cryptography 
%goes via the only known meta-complexity problem which 
%current
%techniques have any hope of proving an equivalence to.
% a characerization of quantum cryptography.
% Therefore, current techniques cannot characterize
% quantum cryptography with a meta-complexity problem
% in $\NP$ or $\QMA$.

\paragraph*{An observation about the hardness of estimating Kolmogorov complexity with an $\NP$ oracle.}

Finally, since our results do relativize, and since there exists an oracle
relative to which $\P = \NP$ but OWPuzz exist, our results also provide a
barrier to showing that estimating Kolmogorov complexity on average (on non polynomial time samplable distributions) can be solved
using an $\NP$ oracle. This is not surprising, 
since computing Kolmogorov
complexity exactly is undecidable~\cite{DBLP:series/txcs/LiV19}, 
but this is not an
observation we are aware of existing elsewhere.

\subsection{Related Work}

\paragraph{Other Meta-Complexity Characterizations.}

% \bnote{This is related work but I think what's important about it was
% already surveyed in the text above, and thus this section could be removed.}\enote{Should mention quantum metacomplexity papers }

Almost all the meta-complexity characterizations of cryptography have shown
equivalences between OWFs and the hardness of a specific meta-complexity
problem on some kind of distribution. Two results however fall outside this
paradigm and are of particular interest. 

First is the work of Hirahara et al \cite{hirahara2023duality} which showed that the existence of OWFs
is equivalent to the failure of symmetry of information to hold on average
for a probabilistic time-bounded notion of 
% $\K$ complexity $\pK^t$. 
Kolmogorov complexity. 
% It is not immediately clear if symmetry of information breaks in the worst
% case for this notion in a world where $\P=\NP$. Therefore it is not clear
% whether the oracles of~\cite{STOC:KQST23} rule out a quantum version of this result
% out.

Second is the work of Ball, Liu, Mazor, and Pass \cite{ball2023kolmogorov}, which first extended this
program outside Minicrypt. In their work they characterize the existence of key-agreement (generally
considered the 
% threshold 
domain
of ``Cryptomania'') by showing that the existence of
key-agreement (KA) protocols is equivalent to the worst case hardness of an interactive
promise notion of Kolmogorov complexity.

\paragraph{Quantum Meta-Complexity.}
In the aftermath of Bernstein and Vazirani's robust definition of quantum
Turing machines \cite{bernstein1997quantum} and their popularization
amongst complexity theorists by Fortnow \cite{fortnow2000one},
four main definitions $\Kq$ \cite{vitanyi2000three}, 
$\QC$ \cite{berthiaume2001quantum}, 
$\KH$ \cite{gacs2001quantum}, and 
$\K_{net}^\epsilon$ \cite{mora2004algorithmic} 
were proposed as extensions of Kolmogorov complexity that could measure the complexity of quantum states. 
Pseudorandom states can be seen to trivially imply the hardness of
estimating any of those measures on the quantum samplable distribution
consisting of either many copies of a Haar random state, or many copies of
a $\PRS$ output. But, besides this trivial implication, there have been no
results connecting them to quantum 
cryptography. When restricted to classical strings all four definitions are
equivalent to classical Kolmogorov complexity. Consequently our result
immediately gives corollary versions of our main theorem by replacing
$\GapK$ with $\GapKq$, $\GapQC$, $\GapH$, or $\GapK_{net}^\epsilon$, and
specifying that the distribution $\D$ is over classical states. 

Recently Chia, Chou, Zhang, and Zhang gave three extensions of the minimum
circuit size problem ($\MCSP$) to the quantum setting
\cite{chia2021quantum}: $\MQCSP$ which measures the quantum circuit
complexity of classical strings, $\UMCSP$ which measures the complexity of
implementing a unitary, and $\SMCSP$ which measures the complexity of
generating a quantum state. They show that several implications between
cryptography and the hardness of MCSP generalize to the quantum setting. In
particular they generalize the result of Allender and Das that $\SZK
\subseteq \BPP^{\MCSP}$ to $\SZK \subseteq \BPP^{\MQCSP}$
\cite{allender2017zero}, and the observation (which can be found in
\cite{ilango2021hardness}) 
that OWFs imply $\GapMCSP$ is hard to 
post-quantum OWFs imply $\GapMQCSP$ is hard. 
They show the interesting
implication that quantum-secure $i\mathcal{O}$ plus $\MQCSP \in \BQP$ imply
that $\NP \sseq \coRQP$, and make the observation that 
the existence of $\PRS$ implies that
computing $\SMCSP$, just like estimating the quantum Kolmogorov measures above, is hard.

% In their work they leave open the question of whether quantum cryptography can be based on the hardness of any of their problems. Because $\MQCSP \in \QCMA$ Krethcmer's oracles rule out a relativizing proof constructing any quantum cryptography as powerful as $\OWP$ from the hardness of $\MQCSP$. However it remains open whether quantum cryptography can be based off the hardness of their other two problems.
% None of these definitions have found wide spread use. This may be because each of the definitions have some features that make them difficult to work with. The first definition by Paul Vitanyi 

% \paragraph{Quantum Cryptography.}

% \bnote{I think we already surveyed relevant past work and we could skip
% this.}

% Can use these papers to talk about how quantum landscape is different from classical
% ~\cite{Kretschmer21}: Shows oracle under which PRS exist but BQP=QMA
% ~\cite{KQST23}: Shows classical oracle under which PRS exist but P=NP
% [BB84],[Wie83]: QKD

% Should talk about introduction of PRS, OWSG, EFI, OWPuzz

\paragraph{Concurrent work.} A concurrent work by Khurana and Tomer independently proves that the existence of one-way puzzles is equivalent to the hardness of probability estimation on quantum samplable distributions~\cite{KT24b}, i.e. the first part of our~\Cref{t:main-result-owpuzz}. The reason they needed this lemma was for a different purpose than ours. While we use this to show that one-way puzzles are characterized by hardness of a meta-complexity problem, they are interested in building one-way puzzles from sampling assumptions. In addition, they use this lemma to show that one-way puzzles are equivalent to state puzzles, one-way puzzles where the key is quantum. 

Another concurrent work by Hiroka and Morimae independently shows one direction of our result, namely that the hardness of $\GapK$ implies the existence of one-way puzzles. Their proof also goes through probability estimation~\cite{concurrent}.

\subsection{Technical Overview}
%\enote{Add motivations/inspirations}
%\bnote{It seems the text already shows its debt/inspiration to previous works?}\mnote{it's very funny moving from the shorter paragraphs above to the longer ones here. We should probably split these up more.}

% The main work of this paper is to prove the equivalence of the following three statements:
% \begin{enumerate}
%     \item One way puzzles exist.
%     \item Probability estimation is hard on some quantum distribution.
%     \item There is some quantum distribution on which $\GapK$ is hard. 
% \end{enumerate}
%We now describe the main technical insights behind the proof
%of~\Cref{t:main-result-owpuzz}.

\noindent We go over the main technical insights in proving the equivalence of (1) the existence of one way puzzles, (2) the hardness of probability estimation, and (3) the hardness of $\GapK$ estimagtion (as in Theorem~\ref{t:main-result-owpuzz}). We first prove that $\lnot (1) \Rightarrow \lnot (2)$, i.e., 
% there does not exist 
the nonexistence of one way puzzles implies that probability estimation is easy on all
quantum distributions. We then prove $\lnot (2) \Rightarrow \lnot (3)$, i.e., that probability
estimation being easy implies that $\GapK$ is easy on all distributions. Finally, we prove that $\lnot (3) \Rightarrow \lnot (1)$, i.e., that $\GapK$ being easy on all distributions is sufficient for breaking $\OWP$. Concluding the loop and showing that the three
statements in our theorem are equivalent.

\paragraph{Nonexistence of One-way Puzzles $\implies$ Probability Estimation.}

The task in this section is to estimate $p_y = \Pr[y \sim \D]$ using one-way puzzle inverters. In the works of Ilango, Ren, and Santhanam
\cite{ilango2021hardness} and Impagliazzo and Luby
\cite{impagliazzo1989one}, the parallel step of estimating $p_x$ using one-way function inverters is done by treating a classical
distribution $\D$ as a function $f$ mapping random inputs $r$ to samples $y
= f(r)$. They then use Valiant and Vazirani \cite{valiant1985np} style
hashing tricks which intuitively work as follows. 

Instead of only inverting $f$, they invert the function $f'(r, h) = f(r), h, h(r)$, where $h$ is the description of a hash function. By extending the length of the hash's output and fixing the bits of $z= h(r)$ at random (i.e., sampling $h, z$ and giving $f(r), h, z$ as input to our inverter), the space of valid preimages can be cut in half again and again until eventually only one valid preimage remains. At this point, the only preimage of $f'(r, h)$ will be $r$ itself, so further increases to the length will with high probability leave no valid preimages. If we try each possible output length, then, we will be able to approximately identify this point with very high probability allowing us to estimate $p_y$.

This technique fails to naively transfer to the quantum setting because
quantum distributions cannot be purified in the way that classically random
ones can. 
% In other words, quantum algorithms
% % by having 
% cannot have 
That is, classically samplable distributions can have
the randomness removed from the sampling process
and isolated as an input to the sampler, but quantum distributions cannot have
their ``quantum randomness'' removed in this way. However, two facts give us some hope. 

First, the proof of the equivalence between one-way functions and the hardness of probability estimation laid out above uses the exact same techniques as is used to prove one-way functions are equivalent to distributional one-way functions, a similar primitive where now only distributional inversion is hard. Consequently breaking a distributional one-way function requires a stronger inverter which can not only find a single valid preimages but can sample from the distribution of preimages. 

Second, the recent work of ~\cite{chung2024central} showed that we can get the quantum version of this result showing that one-way
puzzles exists if and only if \emph{distributional one-way puzzles} (whose security guarantee ) do. Meaning that the non existence of $\OWP$ gives us the powerful tool of distribution inverters. Using these, and hashing over the outputs instead of the now inaccessible random inputs we are able to complete our task as follows.
% The fact that we have this quantum equivalence between one-way puzzles and distributional one-way puzzles then gives us some hope that we can achieve probability estimation in the quantum setting as well. 
% % A distributional one-way puzzle is a one-way puzzle for which it 
% To break a distributional one-way puzzle, the adversary must not only find
% a key which passes the verification, but must \emph{sample} from
% the conditional distribution of keys 
% % given the input puzzle.
% which are paired with the input puzzle.

Given a quantum distribution $\D$, we define a
% $\distOWP$ 
% \emph{distributional one-way puzzle}
distributional one-way puzzle
treating $y$ as a key and $h,h(y)$ as the puzzle. Using similar
intuition as above we can think of each bit $h(y)_i$ as cutting the
distribution of other valid outputs $x \neq y$ in half. When $t = 0$ every
output is a valid inversion. When $t=1$ then we would expect only a subset of outputs with
overall probability of one half are likely to be considered valid. And when
$t = -\log(\Pr_D[y])$ we'd expect only a $2^{-t}$ fraction be considered
valid. 
This intuition holds as long as a noticeable fraction of outputs occur with probability $2^{-t}$. 

By analyzing the distribution over the probability mass of the valid
outputs, we are able to show that distributionally inverting such puzzles
allows you to reliably estimate the probability of an output up to a
polynomial multiplicative factor. This is somewhat weaker than what can be
obtained in the classical setting (i.e. a constant estimation factor).
However this is sufficient to distinguish between elements with a
probability gap of $\omega(\log(n))$ as needed in the next section.

\paragraph{Probability Estimation $\implies$ Computing GapK.}

The reduction is from $\GapK$ to probability estimation is simple: We assume that strings with probability above the
midpoint in the gap  ($x$ s.t. $\Pr[x] > 2^{-m}$) have ``low'' Kolmogorov complexity, and that strings
with probability below the midpoint $\Pr[x] < 2^{-m}$ have ``high'' Kolmogorov complexity.
For this reduction to be valid, we need to prove that this procedure makes
few mistakes in each direction. 

First, we show that strings output from a quantum distribution with high probability but high Kolmogorov complexity do not exist. To do this, we prove a quantum
generalization of the coding theorem which shows how to create a short description of any high probability string. This proof is based on the
observation that a time unbounded machine (like those considered in
Kolmogorov complexity) can simulate the running of a quantum machine. Next we show that we make few errors caused by strings
with low probability and low Kolmogorov complexity. We argue that such
strings must be sampled by the distribution with only negligible
probability. This is because there are few strings with low Kolmogorov
complexity, and by definition they individually have low probability of
being sampled.

\paragraph{Computing GapK $\implies$ Nonexistence of One-way Puzzles.}
This section diverges the farthest from the classical
equivalent~\cite{ilango2021hardness}, as they use the fact that OWFs imply
PRGs with arbitrary stretch~\cite{HILL99,VZ12}. The outputs of such PRGs
have very low Kolmogorov complexity (and $\K^t$ complexity) and so can be
distinguished straightforwardly from random in the PRG security game.
Therefore $\GapK$ being easy implies that there are no PRGs and therefore
no OWFs. However a quantum equivalent to \cite{HILL99} is not known, and
neither $\OWP$ nor even $\OWSG$ are known to imply a quantum pseudo-random
sampler. 

Instead, we will use $\GapK$ to break 
% non-uniform EFID pairs,
what one could call a
\emph{non-uniform pseudorandom low-entropy distribution},
an intermediary construction used to build 
% which have been constructed from $\OWP$ as an intermediate step in building
EFI pairs from $\OWP$ in 
% recent works by 
% Khurana and Tomer~\cite{STOC:KhuTom24} and 
% Chung, Goldin, and Gray~\cite{chung2024central}.
a recent work by 
Chung, Goldin, and Gray~\cite{chung2024central}.
They are created by taking the machinery for building PRGs from OWFs
\cite{HILL99,VZ12} and applying them to $\OWP$s resulting
in something as close to a PRG as we can currently get from $\OWP$s. The
resulting primitive takes some small ($\log n$ bit) piece of
non-uniform advice (which corresponds to the total next-bit pseudoentropy
of the $\OWP$'s sampler) and then samples from a distribution which is
indistinguishable from uniform while having less than $n - n^{\Omega(1)}$ 
bits of
true entropy.

Unfortunately, this distribution is non-uniform and our argument only gives us $\GapK$ estimation on uniform distributions from the non-existence of one-way puzzles. 
However, because the advice strings needed are only $\log n$ bits long, a
uniform distribution which 
% guesses all the $n$ choices for the advice string 
samples an advice string uniformly at random
will include a 
% $1/n$ 
$1/\poly(n)$
fraction of the distribution that corresponds
to the correct advice we want to do well on. We are guaranteed to be able
to estimate $\GapK$ on this distribution and therefore on the
sub-distribution of interest. If $\GapK$ is easy on that sub-distribution
then we can invert the underlying $\OWP$.

\subsection{Open Problems}

% \paragraph*{Towards bypassing the separation between OWFs and OWPuzz.}

We conclude with a few interesting open problems left by our work.

\begin{enumerate}
%    \item
%\textbf{Towards bypassing the separation between OWFs and OWPuzz.}
%% Interestingly, 
%The only difference between our hard problem and the hard problem of~\cite{ilango2021hardness} characterizing OWFs is that our hardness is over quantum samplable distributions whereas theirs is over classically samplable distributions. An interesting question is whether the hardness of $\GapK$ over classical samplable distributions is implied by its hardness over quantum samplable distributions. This would then give an equivalence between OWFs and OWPuzz. Note that to have any hope of proving this requires non-relativizing techniques in order to bypass Kretschmer's oracle separation~\cite{Kretschmer21}.
    \item \textbf{Other Characterizations and Implications for OWPuzz.} 
        The Kretschmer's black-box barrier for quantum cryptography~\cite{Kretschmer21} rules out any (relativizing)
        characterization of one-way puzzles by a meta complexity property 
        % ruled out by $\P=\NP$. 
        which can be computed by $\QMA$ or $\NP$ algorithms.
        Specifically, 
        that barrier rules out a relativizing proof that one-way puzzles
        imply the
        % characterization by the
        hardness of $\MCSP$, $\MQCSP$, $\K^{\poly}$, or $\KT$,
        % each of which 
        as each of these problems
        have
        efficiently checkable witnesses. 
        % However, it is possible that we
        % can still 
        % % prove that their hardness on quantum distributions
        % % implies one-way puzzles. 
        Can we nonetheless still show an implication in the other direction
        by
        building one-way puzzles from their average-case hardness over quantum
        distributions?
        Moreover, it remains possible that OWPuzz can be
        characterized by the hardness of meta-complexity measures which
        allow for descriptions that take super-polynomial time to output
        the string in question such as $\Kt$.
        % and $\K^{exp}$.
        % \bnote{I'm not sure about the last ones. Perhaps pose it less as an
        % open problem and more as something to investigate/refute?}
    \item \textbf{Characterizing OWSGs.} One-way state generators (OWSGs) are a classical-input
        quantum-output generalization of one way functions. 
        % To be a
        % computational rather than information theoretic primitive their
        % Their security definition requires that they must be hard to invert even
        % given a polynomial number of copies of their output. 
        Can their
        existence be characterized by the hardness of some 
        \emph{quantum}
        meta-complexity notion? 
        % Either one of the MCSP generalizations
        % introduced by Chia et al \cite{chia2021quantum} or one of the
        % notions of quantum Kolmogorov
        % complexity\cite{gacs2001quantum,vitanyi2001quantum,berthiaume2001quantum,mora2004algorithmic}.
        Such a characterization would have to overcome several barriers,
        such as likely needing to show an equivalence between OWSGs and
        distributional OWSGs (generalizing a result of Cao and Xue
        \cite{cao2022constructing} which is limited to their symmetric setting).
    \item \textbf{Meta-complexity and EFIs.} EFIs are pairs of computationally
        indistinguishable and statistically far mixed states which are
        equivalent to quantum bit commitments. 
        % Or equivalently
        % computationally indistinguishable, statistically somewhat far
        % distributions over classical strings. 
        This work implies that the
        average-case hardness of $\GapK$ on quantum distributions implies
        EFIs. However, it is unclear whether the existence 
        % (or non existence) 
        of EFIs has any direct meta-complexity implications. In fact, there exist weak barriers to showing that the existence of EFIs has any complexity theory implications at all~\cite{STOC:LomMaWri24}. Nevertheless, perhaps meta-complexity (or quantum meta-complexity) holds the key to bypassing this barrier.
    \item \textbf{Strong average-case hardness of Kolmogorov complexity.}
        One of the strengths of the results of~\cite{ilango2021hardness}
        compared to previous ones is that their characterization of one-way
        functions not only holds with respect to the \emph{weak}
        average-case hardness of computing $\GapK$,
        but also with respect to \emph{strong} average-case hardness.
        Moreover, their result is robust with regards to 
        % the
        % and for any 
        any
        choice of parameters 
        $s = n^\eps$ and $\Delta = n-s-\omega(\log n)$.
        Essential to that weak-to-strong derivation and robustness
        is the equivalence
        between
        pseudorandom generators and one-way functions~\cite{HILL99}.
        Unfortunately, because it is currently only known that one-way
        puzzles imply
        quantum samplable distributions with linear pseudoentropy, we are
        only able to prove an equivalence in the weak average-case
        setting. 
        Can our results be extended to the strong average-case setting,
        and can there be more latitude in the choice of parameters?
\end{enumerate}

\section{Definitions and Preliminaries}

Throughout this paper we assume $\Delta$ and $s$ are polynomial time
computable functions from $\N \rightarrow \N$. 
By $\negl(n)$ we denote a
negligible function vanished by all polynomials: 
$\negl(n) = 1/n^{\omega(1)}$. 
% We write $\SD$ to refer to the statistical distance between two
% distributions over binary strings.
We will write QPT as a shorthand for ``quantum polynomial time''.

\subsection{Probability}

Given two distributions $X,Y$,
we denote their statistical distance as 
% $\SD(X \;||\; Y)$ 
$\SD(X ; Y)$ 
or 
$\SD(X,Y)$ which is equal to $\frac{1}{2}\sum_{z \in X \cup Y}(|\Pr[X \rightarrow z]-\Pr[Y \rightarrow z]|))$.
We denote by $\calu_n$ the uniform distribution over $\blt^n$.
Given a distribution $\cald$, we denote by $\set{\cald \to x}$ the event
that a sample of $\cald$ is equal to $x$.
We will also write
$x \flws \cald$ to denote that $x$ is sampled from $\cald$.
We denote by $H(\cald)$ the entropy of a probability distribution
$\cald$,
which is defined as
$H(\cald) := \Exp_{x \flws \cald}\left[-\log(\Pr[\cald \to x])\right]$.

\subsection{Cryptography and one-way Puzzles}

We say that a distribution $\cald$ supported on $\blt^m$
is computationally indistinguishable from the uniform distribution
if, for every $c \geq 1$ and every QPT algorithm $\cala$,
we have
\begin{equation*}
    \abs{
        \cala(\cald)
        -
        \cala(\calu_m)
    }
    <
    n^{-c},
\end{equation*}
for every large enough $n$.

\begin{definition}[$\OWP$]\label{def:owpuzz}
An $(\alpha,\beta)$ one way puzzle ($\OWP$) is a pair of a sampling algorithm and a verification function $(\Samp,\Ver)$ with the following syntax:
    \begin{enumerate}
        \item $\Samp(1^{\lambda}) \to (k,s)$ is a uniform QPT algorithm which outputs a pair of classical strings $(k,s)$. We refer to $s$ as the puzzle and $k$ as the key. Without loss of generality, we can assume $k \in \{0,1\}^\lambda$.
        \item $\Ver(k,s) \to b$ is some (possibly uncomputable) function which takes in a key and puzzle and outputs a bit $b \in \{0,1\}$.
    \end{enumerate}
    satisfying the following properties:\\
    \begin{enumerate}
        \item Correctness: For all sufficiently large $\lambda$, outputs of the sampler pass verification with overwhelming probability
        $$\Pr_{\Samp(1^{\lambda})\to (k,s)}[\Ver(k,s)\to 1] \geq 1 - \alpha$$
        \item Security: Given a puzzle $s$, it is computationally infeasible to find a key $s$ which verifies. That is, for all non-uniform QPT algorithms $\A$, for all sufficiently large $\lambda$,
        $$\Pr_{\Samp(1^{\lambda}) \to (k,s)}[\Ver(\A(s),s) \to 1] \leq \beta$$
    \end{enumerate}
    If for all $c$, $(\Samp,\Ver)$ is a $(\lambda^{-c},\lambda^{-c})$ one way puzzle, then we say that $(\Samp,\Ver)$ is a strong $\OWP$ and omit the constants. When unambigious, we will simply say that such a $(\Samp,\Ver)$ is a $\OWP$.
\end{definition}

\begin{definition}[$\distOWP$]
    \label{def:dist-owpuzz}
    % \bnote{A better definition would probably make reference to a security
    % parameter.}
    % A distributional one-way puzzle is a QPT algorithm
    Let
    $\Samp$
    be a QPT algorithm such that
    % such that
    % $\Samp(1^n) \to (k,s)$ 
    $\Samp(1^\lambda) \to (k,s)$ 
    is a pair of classical strings
    referred to as the key and the puzzle, respectively.
    We say that
    the sampling algorithm $\Samp$ is a 
    \emph{$\gamma$-secure distributional one-way puzzle}
    if, 
    % it satisfies
    % the following security guarantee.
    % 
    % % Let $\Dkeycond{s}$
    % % be the distribution of keys $k$ output by
    % % $\Samp(1^n)$, conditioned on the puzzle being $s$. 
    % % There exists a constant $\gamma > 0$ such that,
    for every QPT algorithm $\cala$,
    we have
    \begin{equation*}
        % \SD(\Dkeycond{\rnds}, \cala(\rnds))
        % \abs{\Dkeycond{\rnds} - \cala(\rnds)}_1
        \SD( (k,s) \;; (\cala(s), s))
        % \geq
        % 1-\negl(n),
        > \gamma,
    \end{equation*}
    % where $\rnds$ is sampled from $\Samp$. 
    where $(k,s)$ is sampled from 
    % $\Samp(1^n)$.
    $\Samp(1^\lambda)$.
    % We say that $\Samp$ is a \emph{distributional one-way puzzle}
    % if, for every $c \geq 1$, 
    % it is a
    % $n^{-c}$-secure distributional one-way puzzle.
\end{definition}
% \bnote{
%     % Is this equivalent to requiring that
%     % \begin{equation*}
%     %     \abs{\Samp - (\cala(s),s)}_1 \geq 1 - \negl(n)?
%     % \end{equation*}
%     % It must be.
%     % Furthermore, I'm not sure if the statistical distance should be
%     % $1-\negl(n)$ or $> 1/n^q$ for some polynomial for some constant $q$.
%     I put the weakest requirement for the proof of the probability
%     estimation to go through.
%     However, in Chung-Gray-Goldin, the security guarantee is weaker:
%     instead of a constant $\gamma$,
%     it suffices to have $1/\poly(\lambda)$ security.
%     That's even better for us!
% }

% We recall a notion of KL (Kullback-Leibler) divergence~(see,
% e.g.,~\cite{DBLP:conf/stoc/VadhanZ12}).
% The KL divergence of two random variables $X,Y$
% is defined as
% \begin{equation*}
%     \KL(X \mid\mid Y)
%     :=
%     \Exp_{a \flws A}
%     \left[
%         \log
%         \frac{\Pr[X \to a]}{\Pr[Y \to a]}
%     \right].
% \end{equation*}
% 
% \begin{lemma}[Pinsker's inequality]
%     \label{lem:sd-to-kl}
%     For any two probability distributions $P$ and $Q$, we have
%     \begin{equation*}
%         \Delta(P,Q)
%         \leq
%         \sqrt{\frac{1}{2} \KL(P \mid\mid Q)}.
%     \end{equation*}
% \end{lemma}

\begin{theorem}[\protect{\cite[Theorem 33]{chung2024central}}]
    \label{thm:dist-owpuzz}
    If there exists a $n^{-c}$-secure distributional one-way puzzle for
    some $c \geq 1$,
    then there exists a one-way puzzle.
\end{theorem}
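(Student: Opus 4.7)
The plan is to prove the contrapositive via the Impagliazzo--Luby style hashing transformation, adapted to the quantum-sampler setting. Concretely, I assume that no strong one-way puzzle exists, and use this to construct, for any candidate distributional one-way puzzle $\Samp$, a QPT distributional inverter that achieves statistical distance below $n^{-c}$ for every constant $c$, contradicting the hypothesis.

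Given $\Samp$, I would build a family of candidate one-way puzzles $(\Samp_t, \Ver_t)$ indexed by $t \in \{0, 1, \ldots, \lambda\}$. Each $\Samp_t(1^\lambda)$ runs $\Samp$ to get $(k, s)$, samples a $2$-universal hash $h : \{0,1\}^\lambda \to \{0,1\}^t$ uniformly, and outputs key $k$ with puzzle $(s, h, h(k))$. The verifier $\Ver_t(k', (s, h, z))$ accepts iff $\Ver(k', s) = 1$ and $h(k') = z$. Correctness is immediate from correctness of $(\Samp, \Ver)$. Since by assumption no strong one-way puzzle exists, for each $t$ there is a QPT adversary $\cala_t$ that, on input $(s, h, z)$ drawn from $\Samp_t$, recovers an accepting key with inverse-polynomial probability.

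I would then combine these adversaries into a single distributional inverter $\cala$ for $\Samp$: on input a puzzle $s$, pick $t$ uniformly from $\{0, \ldots, \lambda\}$, sample $h$ uniformly from a $2$-universal family with range $\{0,1\}^t$ and $z \in \{0,1\}^t$ uniformly, run $\cala_t(s, h, z)$ to obtain $k'$, and output $k'$. The standard bucketing argument partitions keys by their conditional weight $p_{k \mid s} := \Pr[\Samp \to (k,s)] / \Pr[\Samp \to (\cdot, s)]$ into $O(\log \lambda)$ buckets by $\lfloor -\log p_{k \mid s}\rfloor$. For each bucket, the level $t$ equal to its index causes the hashed preimage set of $z$ to contain $O(1)$ keys in expectation, and over the random choice of $(h, z)$, the distribution of $(s, h, h(k))$ used by $\Samp_t$ is statistically close to the distribution used by $\cala$. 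An inverse-polynomial success probability of $\cala_t$ therefore yields an output whose joint distribution with $s$ is $n^{-O(1)}$-close to the true $(k, s)$. Finally, one applies the standard hardness amplification / parallel-repetition trick to convert the resulting weak one-way puzzle (with inverse-polynomial correctness and security slack) into a strong $\OWP$, as in~\cite{chung2024central}.

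The main obstacle is the bucketing analysis: one must control the total contribution from keys with very small conditional probability (which are hard to hash into singletons) and from keys with very large conditional probability (which dominate only a negligible mass), and then verify that for \emph{some} $t \in \{0, \ldots, \lambda\}$ the corresponding $\cala_t$ gives good enough statistical closeness on a noticeable fraction of the probability mass. Because the puzzle, key, and hash are all classical, the quantum nature of $\Samp$ does not interfere with the analysis: the adversaries $\cala_t$ are QPT, but their inputs are classical strings, so the entire Valiant--Vazirani hashing argument carries over from the classical Impagliazzo--Luby proof with QPT in place of PPT. Combining this with the hardness amplification step completes the reduction.
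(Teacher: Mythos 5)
You should first note that the paper does not prove this statement at all: it is imported verbatim as \cite[Theorem 33]{chung2024central}, so the only fair comparison is to that external proof, and your overall strategy (hash the classical key, bucket by the conditional probability $p_{k\mid s}$, reduce to the non-existence of ordinary one-way puzzles, and invoke amplification) is in the same Impagliazzo--Luby spirit as the cited argument. However, as written your sketch has genuine gaps. First, a quantitative/quantifier gap: the negation of ``strong $\OWP$s exist'' only gives you, for a candidate puzzle, a QPT adversary succeeding with \emph{inverse-polynomial} probability on infinitely many $\lambda$. Your central claim that ``an inverse-polynomial success probability of $\cala_t$ therefore yields an output whose joint distribution with $s$ is $n^{-O(1)}$-close to the true $(k,s)$'' is false: an adversary that succeeds with probability $\lambda^{-5}$ and otherwise outputs garbage induces statistical distance close to $1$, whereas the theorem requires distance at most $n^{-c}$ for the given $c$, which forces inverters succeeding with probability $1-n^{-O(c)}$. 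The correct order of quarrels is to use ``no $\OWP \Rightarrow$ no \emph{weak} $\OWP$'' (via the amplification result of \cite{chung2024central}) \emph{before} the hashing argument, so that each constructed puzzle admits an almost-perfect solver; your amplification step is instead tacked on at the end, where there is no ``resulting weak one-way puzzle'' to amplify --- in the contrapositive you are constructing an inverter, not a puzzle.

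Second, your candidate puzzles are indexed by $t \in \{0,\dots,\lambda\}$, a range that grows with the security parameter, so ``for each $t$ there is a QPT adversary $\cala_t$'' is not a legitimate invocation of the hypothesis (the adversary and its polynomial may depend on $t$), and worse, each $\cala_t$ is only guaranteed to work on its own infinite set of values of $\lambda$, so the combined inverter may work on no common $\lambda$ at all. The standard repair, which this paper itself uses in its probability-estimation construction (Algorithm 1, where $k \randfrom [2n]$ is sampled inside the sampler), is to fold the random choice of $t$ into a \emph{single} candidate puzzle so that one adversary handles all hash lengths simultaneously. Finally, a smaller definitional point: a distributional one-way puzzle (\Cref{def:dist-owpuzz}) has no verifier, so the clause ``$\Ver(k',s)=1$'' in your $\Ver_t$ is undefined; you must replace it by, say, membership of $(k',s)$ in the support of $\Samp$ (permissible, since verifiers may be uncomputable), after which the bucketing analysis must show that an \emph{arbitrary} accepting key in the hash bucket is distributed close to the true conditional key distribution --- exactly the delicate step of Impagliazzo--Luby, and one that only goes through once the near-perfect solvers from the first point are in hand.
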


\subsection{Kolmogorov Complexity}

% We use the Quantum Turing Machine formalism introduced by Bernstein and
% Vazirani \cite{bernstein1993quantum} and simplified by Adleman et al
% \cite{adleman1997quantum} in which a Quantum Turing Machine is specified by
% a tuple $()$
% \bnote{?}

This section introduces Kolmogorov complexity and a computational problem
about estimating its value.
We refer the reader to~\cite{DBLP:series/txcs/LiV19} for properties about Kolmogorov
complexity.

\begin{definition} 
    % Fix a universal Turing machine $U$. 
    Let $U$ be a universal Turing machine.
    For strings $x \in \{0,1\}^*$, 
    the Kolmogorov complexity $\K_U(x)$ of $x$ is the length of
    the shortest program $\rho$ such that $U(\rho)$ will halt and output $x$
    after a finite number of steps.
\end{definition}

Our results hold for every universal Turing Machine $U$, so
we will omit $U$ and simply write $\K_U$ as $\K$.  

\begin{definition}
    \label{def:GapK}
    For two functions 
    $s_1(n), s_2(n) : \bbn \to \bbn$
    satisfying,
    $0 < s_1(n) < s_2(n) < n$,
    we denote by
    $\GapK[s_1,s_2]$ the promise problem where YES instances are strings
    with $\K(x) \leq s_1(n)$ and NO instances are strings with $\K(x) \geq
    s_2$.
\end{definition}

\begin{lemma}
    The number of strings $x \in \blt^n$
    such that $\K(x) \leq t$ 
    is at most 
    $2^{t+1}-1$.
\end{lemma}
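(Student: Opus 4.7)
The plan is to give a direct counting argument based on the definition of Kolmogorov complexity via the universal Turing machine $U$.

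First, I would observe that the set of binary programs of length at most $t$ has size exactly $\sum_{i=0}^{t} 2^i = 2^{t+1}-1$, since for each length $i \in \{0, 1, \dots, t\}$ there are $2^i$ binary strings of that length.

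Next, I would define a map $\phi$ from the set $S_t := \{x \in \blt^n : \K(x) \leq t\}$ to the set of programs of length at most $t$ by sending each $x \in S_t$ to a (lexicographically first, say) shortest program $\rho$ with $|\rho| \leq t$ and $U(\rho) = x$. Such a program exists by the definition of $\K(x)$. The map $\phi$ is injective: if $\phi(x) = \phi(x') = \rho$, then $x = U(\rho) = x'$ because $U$ is deterministic. Therefore $|S_t| \leq 2^{t+1}-1$, which is exactly the bound claimed.

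This proof is essentially a single paragraph of counting with no real obstacle, so I do not anticipate any difficulty. The only mild subtlety worth flagging is that one should not count programs that fail to halt or that output the wrong string — but since we build $\phi$ starting from $x$ and picking a program that certifies membership in $S_t$, this issue never arises; every element of the image of $\phi$ is automatically a valid, halting program of length at most $t$.
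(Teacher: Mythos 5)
Your proof is correct and uses the same standard counting argument the paper relies on (and spells out in its proof of Lemma~\ref{claim:highk}): there are at most $\sum_{i=0}^{t}2^{i}=2^{t+1}-1$ programs of length at most $t$, and mapping each string to a shortest witnessing program is injective since $U$ is deterministic. No issues.
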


\subsection{Probability estimation}

We say that a quantum algorithm $\cald$
is a \emph{family of quantum samplable distributions}
if
$\cald$ is a QPT algorithm such that
$\cald(1^n) \in \blt^n$. We will use the shorthand 
$\cald_n := \cald(1^n)$.

We now define the task of probability estimation formally.
\begin{definition}
    \label{def:probest}
    Let $\cald = \set{\cald_n}_{n \in \bbn}$ be a 
    family of quantum samplable distribution.
    For $x \in \blt^n$, let $p_x$ be the
    probability that $x$ is sampled from $\cald_n$.
    We say that an algorithm $\cala$ performs
    % \emph{probability estimation of $\cald$ infinitely often}
    \emph{probability estimation on $\cald$ infinitely often with error at
    most $\eps$ and precision $\delta$}
    if, for infinitely many $n \in \bbn$,
    % and for every constant $c \geq 1$,
    we have
    $$
    % \Pr_{\substack{x\randfrom \mathcal{D}_n \\ \cala}} [p_x\leq \cala(x) \leq 8p_x] \geq 1 - n^{1-c/2}
    % \Pr_{\substack{x\randfrom \mathcal{D}_n \\ \cala}} [\Omega(p_x) \leq \cala(x) \leq p_x] \geq 1 - n^{-c}.
    \Pr_{\substack{x\randfrom \mathcal{D}_n \\ \cala}} [\delta \cdot p_x \leq \cala(x) \leq p_x] \geq 1 - \eps.
    $$
    We say that 
    \emph{probability estimation}
    on a distribution $\cald$
    is \emph{weakly hard on average}
    if there exists a constant $q \geq 1$
    such that no QPT algorithm can 
    perform probability estimation on $\cald$ infinitely often with error
    at most $n^{-q}$ and precision at least $n^{-O(q)}$.
    % We say that probability estimation on $\cald$ is \emph{weakly easy on average}
    % if it is not weakly hard on average.
\end{definition}

\subsection{Average-case complexity}

A decision or promise problem is said to
be
\emph{weakly average-case hard for quantum algorithms on a distribution $\cald$} if
there exists a constant $q \geq 1$
such that 
every 
quantum
polynomial-time algorithm
gives the wrong answer with probability at least $n^{-q}$
on inputs sampled from $\cald$,
for all sufficiently large $n$.
The problem is simply said to be
\emph{weakly hard on average for quantum algorithms}
if it's hard on some quantum samplable distribution.

An algorithm is said to \emph{solve} a problem \emph{with error at most
$\eps$ on a distribution $\cald$}
if
it gives the right answer with probability at least $1-\eps$
on inputs sampled from $\cald$.
A problem is said to be \emph{weakly easy on average} for quantum
algorithms on a distribution $\cald$ if,
% for all quantum samplable distributions $\cald$ and
for every constant $q \geq 1$, there exists a QPT algorithm that solves it
on $\cald$ with error at most $n^{-q}$
on infinitely many input lengths.

% \section{Probability estimation from the inexistence of one-way puzzles}
\section{Probability Estimation from the Nonexistence of One-way Puzzles}

In this section we prove
% our key technical theorem. 
% We will show 
that the non-existence
of one-way puzzles implies algorithms for probability estimation on all
quantum samplable distributions.

\begin{theorem}[Item 2 $\implies$ Item 1]
    \label{thm:owpuzz-probest}
    If one-way puzzles do not exist, 
    then, for any family of quantum
    samplable distributions $\D$,
    probability estimation on
    the distribution $\cald$ is not weakly hard on average.
    % there exists a QPT algorithm that performs probability estimation on
    % $\cald$ infinitely often.
    % and for all $d>0$, there exists an
    % algorithm $\cala(x)$ such that
    % $$\Pr_{x\randfrom \mathcal{D}}[p_x\leq \cala(x) \leq 8p_x] \geq 1 - n^{-d}$$
    % infinitely often.
\end{theorem}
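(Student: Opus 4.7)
The plan is to prove the contrapositive: assuming no one-way puzzle exists, I show that for every polynomial error parameter $q$ and every quantum samplable family $\cald = \{\cald_n\}$, some QPT algorithm performs probability estimation on $\cald$ with error $n^{-q}$ and precision $n^{-O(q)}$ on infinitely many $n$. By \Cref{thm:dist-owpuzz}, the non-existence of one-way puzzles implies the non-existence of distributional one-way puzzles at every inverse-polynomial security, so every QPT sampler with classical output $(k,s)$ admits a QPT inverter $\cala$ with $\SD((\cala(s),s),\,(k,s)) \leq n^{-c}$ for any desired constant $c$.

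The core construction is a family of distributional-OWP candidates parameterized by $t \in \{0,1,\ldots,n\}$: the sampler $\Samp_t(1^n)$ draws $y \flws \cald_n$, draws a pairwise-independent hash $h : \blt^n \to \blt^t$, and outputs key $k=y$ and puzzle $s=(h,h(y))$. The associated distributional inverter $\cala_t$ must then approximately sample, on input $(h,z)$, from the conditional distribution that assigns each $y'$ mass proportional to $p_{y'}\cdot \one[h(y')=z]$, where $p_{y'}:=\Pr[\cald_n\to y']$. Hence for any fixed target $y^\star$ with noticeable mass, a pairwise-independence calculation (with a second-moment bound to control the random denominator) yields
\[
\Pr_{h,\cala_t}\!\bigl[\cala_t(h,h(y^\star)) = y^\star\bigr] \;\approx\; \frac{p_{y^\star}}{p_{y^\star}+2^{-t}(1-p_{y^\star})}.
\]
This sigmoidal expression transitions sharply from near $0$ to near $1$ as $t$ crosses $\log(1/p_{y^\star})$, so the estimator on input $y^\star$ sweeps $t$ upward, empirically estimates the success probability $\hat\sigma(t)$ via polynomially many fresh samples of $h$ and runs of $\cala_t$, and outputs $\hat p = 2^{-t^\star}$ where $t^\star$ is the smallest value at which $\hat\sigma(t)$ first exceeds $1/2$.

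The main obstacle is converting the distributional inverter's \emph{joint} TV guarantee, which holds only averaged over $y\flws\cald_n$, into a \emph{per-sample} guarantee for most $y^\star$ in the support of $\cald_n$. A Markov-style argument turns $\SD \leq n^{-c}$ into the statement that, for a $1-n^{-c/2}$ fraction of $y^\star \flws \cald_n$, the pointwise TV between $\cala_t(h,h(y^\star))$ and the true conditional is at most $n^{-c/2}$; this polynomial loss is precisely what forces the multiplicative precision $n^{-O(q)}$ rather than the constant multiplicative precision obtained classically in \cite{impagliazzo1989one,ilango2021hardness}. The remaining steps are routine: Chernoff bounds to reliably detect the $1/2$-crossing of $\hat\sigma(t)$, absorbing $y^\star$ with vanishing $p_{y^\star}$ into the $n^{-q}$ error budget (since such $y^\star$ are sampled with negligible total probability under $\cald_n$), and choosing $c$ as a sufficiently large function of $q$ so that all of the error, precision, and concentration bounds fit simultaneously inside the required $n^{-q}$ budget on infinitely many $n$.
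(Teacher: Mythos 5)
Your overall strategy is the same as the paper's: hash the sampled string, use a distributional one-way-puzzle inverter to (approximately) sample the conditional distribution given the hash value, sweep the hash output length, and read off $p_x$ from where the inverter's probability of returning $x$ crosses a constant threshold, with a Markov-style argument converting the average-case statistical-distance guarantee into a pointwise one for most samples. However, there is a genuine gap in how you invoke the hypothesis. You define a separate candidate $\Samp_t$ for each $t \in \{0,1,\dots,n\}$ and ask for an inverter $\cala_t$ for each. First, this is not a well-formed collection of distributional one-way puzzle candidates: a candidate must be a single uniform QPT sampler as a function of the security parameter, while your index $t$ ranges up to $n$ itself. Second, and more seriously, each inverter you would obtain is only guaranteed to succeed for infinitely many $n$, and there is no way to align these infinitely-often sets across the $n+1$ inverters your estimator must call on one and the same input length; their common good set can be empty. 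The paper resolves exactly this by folding the hash length into a single sampler --- $\Samp(1^n)$ draws $k \randfrom [2n]$, $h \randfrom \calh_n^k$, $x \flws \cald_n$, and outputs puzzle $(k,h,h(x))$ with key $x$ --- so one inverter, good on one infinite set of input lengths, serves all $k$ simultaneously, and the Markov argument then gives, for most $x$, closeness of the true and inverted conditionals \emph{for all $k$ at once}, which is what the sweep requires.

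Two smaller points. Your second-moment bound on the colliding probability mass needs pairwise independence of the indicators $\one[h(x')=h(x)]$ over distinct $x' \neq x$, which is a three-point condition on $h$; a generic pairwise-independent family does not supply it, which is why the paper takes $\calh_n^k$ to be $3$-wise independent. And the inputs you discard must be characterized by small \emph{cumulative lower-tail mass} --- the paper's set $S=\set{x: \Pr_{x'\flws\cald}[p_{x'}\le p_x]\le n^{-c}}$, which has total probability at most $n^{-c}$ --- not by smallness of $p_{y^\star}$ itself: under the uniform distribution every string has exponentially small probability yet such strings are sampled with total probability one, so ``vanishing $p_{y^\star}$ implies negligible total mass'' is false as stated, whereas the lower bound on colliding mass (the step that makes the success probability small for $t$ well below $\log(1/p_{y^\star})$) fails precisely on $S$, which is small.
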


The result will follow from the following lemma.
\begin{lemma}
    \label{lemma:probest}
    % % For all $c>0$,
    % Let $c > 0$ 
    % and
    % let $\cald$ be a quantum samplable distribution.
    % For $x \in \blt^n$, let $p_x := \Pr[\cald_n \to x]$.
    % If there do not exist $n^{-c}$-secure distributional one way puzzles,
    % % then for any family of quantum samplable distributions $\D$,
    % then
    % there is an algorithm 
    % % $A(x)$ 
    % $\cala$
    % % which estimates 
    % % % $p_x\coloneqq \Pr[\D \to x]$.
    % % % $p_x\coloneqq \Pr[\D_n \to x]$.
    % % $p_x$.
    % % In other words,
    % such that
    % $$\Pr_{x\randfrom \mathcal{D}_n}[p_x\leq \cala(x) \leq 8p_x] \geq 1 - n^{1-c/2}$$
    % % infinitely often.
    % for infinitely many $n \in \bbn$.

    Let $c > 0$.
    If there do not exist $n^{-c}$-secure distributional one-way puzzles,
    then, for any family of quantum samplable distributions $\D$,
    there is a QPT algorithm 
    % $A(x)$ 
    $\cala$
    which multiplicatively estimates 
    % % $p_x\coloneqq \Pr[\D \to x]$.
    $p_x\coloneqq \Pr[\D_n \to x]$ infinitely often.
    % where $x \in \blt^n$.
    % $p_x$.
    In other words,
    % such that
    we have
    $$\Pr_{\substack{x\randfrom \mathcal{D}_n \\ \cala}}
    [p_x\leq \cala(x) \leq 4n^{2c}p_x] 
    \geq 1 - n^{1-c/2}$$
    % infinitely often.
    for infinitely many $n \in \bbn$.

% 
%     Let $c > 0$ and
%     assume there does not exist a one-way puzzle.
%     % Let $\cald = \set{\cald_n}_{n \in \bbn}$ be a quantumly samplable distribution
%     Let $\cald$ be a QPT algorithm such that $\cald_n := \cald(1^n) \in \blt^n$.
%     % and let $q \geq 1$ be an arbitrary constant.
%     For $x \in \blt^n$, let 
%     % $p_x := \Pr_{x' \flws \cald_n}[x' = x]$.
%     $p_x\coloneqq \Pr[\D \to x]$.
%     There exists a QPT algorithm 
%     % $\cala$
%     $A$
%     such that, for infinitely many $n \in \bbn$, we have
%     % \bnote{is the i.o. condition necessary?}
%     \begin{equation*}
%         \Pr_{x \flws \cald_n}
%         % [\Omega(p_x) \leq \cala(x) \leq p_x]
%         [\Omega(p_x) \leq A(x) \leq p_x]
%         \geq 
%         % 1-O(1/n^q).
%         1-n^{1-c/2}.
%     \end{equation*}
\end{lemma}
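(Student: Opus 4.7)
\medskip

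\textbf{Proof plan.} The plan is to reduce probability estimation to distributional inversion of a carefully chosen family of one-way puzzles built via pairwise-independent hashing over the outputs of $\D$, following the Impagliazzo--Luby / Valiant--Vazirani paradigm but, as outlined in the technical overview, hashing over output strings $y$ rather than over the inaccessible ``quantum randomness'' that produced $y$.

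First, for each threshold $t \in \{0, 1, \ldots, n\}$, define a QPT sampler $\Samp_t(1^n)$ that invokes $\D_n$ to obtain $y$, samples a pairwise-independent hash $h : \blt^n \to \blt^t$, and outputs key $k = y$ and puzzle $s = (h, h(y))$. By hypothesis, $\Samp_t$ is not an $n^{-c}$-secure distributional one-way puzzle, so there is a QPT inverter $\cala_t$ with $\SD\bigl((y, h, h(y)) \;;\; (\cala_t(h,h(y)), h, h(y))\bigr) \leq n^{-c}$. Intuitively, on a typical puzzle $(h, h(x))$, the inverter $\cala_t$ approximately samples from the distribution of $y \sim \D_n$ conditioned on $h(y) = h(x)$.

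Next, to estimate $p_x$ on input $x$, the algorithm will, for each $t$, draw $\poly(n)$ fresh hashes $h$ and empirical samples from $\cala_t(h, h(x))$ to estimate
\[
q_t(x) \;\coloneqq\; \Pr_{h, \cala_t}\!\bigl[\cala_t(h, h(x)) = x\bigr].
\]
In the idealized world where $\cala_t$ samples exactly from the conditional distribution, $q_t(x)$ is close to the average over $h$ of $p_x / \bigl(p_x + \sum_{y \neq x,\, h(y) = h(x)} p_y\bigr)$, whose expected ``collision mass'' is $2^{-t}(1 - p_x)$ by pairwise independence. Thus $q_t(x)$ is tiny when $t \ll -\log p_x - O(\log n)$ and close to $1$ when $t \gg -\log p_x + O(\log n)$. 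The estimator locates the transition threshold $t^\star(x)$ (e.g.\ the least $t$ whose empirical $q_t(x)$ exceeds $\tfrac{1}{2}$) and outputs $2^{-t^\star(x) + \Theta(c \log n)}$; with an additive slack of $2c \log n + O(1)$ on each side of $t^\star$, this falls in the window $[p_x,\, 4 n^{2c} p_x]$ as required.

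Finally, the error analysis is where the main work lies. Three sources of error must be controlled: (i) the $n^{-c}$ statistical distance of the distributional inverter, (ii) empirical-averaging error in the estimates of $q_t(x)$, and (iii) the fluctuation of the single-$h$ bucket mass $\sum_{y \neq x,\, h(y) = h(x)} p_y$ around its expectation. For (i), a Markov-style argument applied to the SD bound shows that, except for a set of $x$'s of $\D_n$-mass at most roughly $n^{-c/2}$, the marginal SD between $\cala_t(h, h(x))$ and the true conditional distribution is already $\leq n^{-c/2}$; union-bounding over the $n+1$ values of $t$ caps the overall failure probability at $(n+1) \cdot n^{-c/2} \leq n^{1 - c/2}$, matching the lemma. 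For (ii), Chernoff bounds with $\poly(n)$ samples and an additional union bound over $t$ make the empirical estimates of $q_t(x)$ accurate to within a small constant with overwhelming probability. For (iii), pairwise independence yields a second-moment bound that, together with Chebyshev, forces the single-$h$ bucket mass to concentrate around its expectation on a large-enough fraction of $h$'s to pin down the transition location up to $O(c \log n)$ bits. The main obstacle is fusing these three approximations into a clean statement that, with probability $1 - n^{1 - c/2}$ over $x \sim \D_n$ and internal randomness, the transition threshold $t^\star(x)$ is indeed within $O(c \log n)$ of $-\log p_x$. In particular, inputs $x$ with extraordinarily small $p_x$ may not exhibit a clean transition in $q_t(x)$; these must be shown to have total $\D_n$-mass negligible compared to $n^{1 - c/2}$ and folded into the failure set.
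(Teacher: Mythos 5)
Your overall paradigm matches the paper's (hash over the outputs of $\D$, use a distributional inverter to approximately sample the conditional distribution, and locate the hash length at which the inverter starts returning $x$), but there is a genuine gap in how you obtain the inverters. You define a separate puzzle $\Samp_t$ for each threshold $t \in \{0,\dots,n\}$ and invoke a separate inverter $\cala_t$ for each. First, this is not even a well-posed family: a one-way puzzle is a single uniform QPT sampler on input $1^n$, and your index set grows with $n$. More importantly, the non-existence of $n^{-c}$-secure distributional one-way puzzles only yields, for each fixed sampler, an inverter that succeeds on \emph{infinitely many} input lengths; different samplers yield different infinite sets of good lengths, and these sets need not have any common element, let alone an infinite intersection. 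Your estimator needs all the inverters $\cala_0,\dots,\cala_n$ to be good at the same $n$, so the ``for infinitely many $n$'' conclusion of the lemma does not follow from your setup. The paper circumvents exactly this by folding the threshold into a single puzzle (the sampler draws $k \randfrom [2n]$, outputs puzzle $(k,h,h(x))$ and key $x$), obtaining one inverter good on infinitely many $n$, and then using a Markov/averaging argument to show that on each such $n$, with probability $1-n^{-d}$ over $x \flws \D_n$, the inverter is simultaneously $n^{-d}$-close to the true conditional distribution for \emph{all} $k \in [2n]$.

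Two smaller points. Your step (iii) claims a second-moment bound from a pairwise-independent hash family, but the variance of the bucket mass $\sum_{y\neq x,\,h(y)=h(x)} p_y$ involves covariances of indicators $\mathds{1}[h(y)=h(x)]$ and $\mathds{1}[h(y')=h(x)]$ over three distinct points $x,y,y'$, which requires 3-wise independence (this is why the paper uses a 3-wise independent family). Finally, you flag but do not resolve the issue of inputs with extraordinarily small $p_x$; the paper handles this by defining $S=\{x : \Pr_{x'\flws\D}[p_{x'}\leq p_x]\leq n^{-c}\}$, observing $\Pr_{x\flws\D}[x\in S]\leq n^{-c}$, and only requiring the transition analysis (its Lemma on $P_{k,x}$) for $x\notin S$, where a noticeable fraction of the mass sits at probabilities below $2^{-m}$ so that the Chebyshev step has something to concentrate. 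These two are fixable details, but the per-threshold-inverter issue is a structural flaw in the reduction as proposed.
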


Indeed, \Cref{thm:dist-owpuzz} states that, if one-way puzzles do not exist,
then $n^{-c}$-secure distributional one-way puzzles do not exist
for every $c > 0$.
In particular, setting $c=2(d+1)$ in \Cref{lemma:probest}, 
\Cref{thm:owpuzz-probest} follows.
We now prove the lemma.

\begin{proof}[Proof of \Cref{lemma:probest}]
Let 
$$S = \set{x \in \blt^n : \Pr_{x' \flws \cald}[p_{x'} \leq p_x] \leq n^{-c}}.$$
We observe that we rarely sample elements from $S$, 
so we only need to succeed
in the probability estimation of $p_x$ for $x \notin S$.
\begin{claim}
    We have
    $\Pr_{x \flws \cald}[x \in S] \leq n^{-c}$.
\end{claim}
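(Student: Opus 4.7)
The plan is to reduce the bound to a single application of the defining condition of $S$, evaluated at the ``largest'' element of $S$. Intuitively, being in $S$ witnesses that the lower-tail mass $\Pr_{x' \flws \cald}[p_{x'} \leq p_x]$ is tiny, and that same tail mass is an upper bound on $\Pr_{x \flws \cald}[x \in S]$ — provided we evaluate it at the element of $S$ with the \emph{largest} probability, so that $S$ sits inside the corresponding sublevel set.

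Concretely, if $S = \emptyset$ the claim is trivial, so assume $S \neq \emptyset$ and let $x^* \in \argmax_{x \in S} p_x$, which exists because $S \subseteq \blt^n$ is finite. Every $x \in S$ satisfies $p_x \leq p_{x^*}$, hence $S \subseteq \set{x' \in \blt^n : p_{x'} \leq p_{x^*}}$ and
\begin{equation*}
    \Pr_{x \flws \cald}[x \in S] \;=\; \sum_{x \in S} p_x \;\leq\; \sum_{x' \in \blt^n \,:\, p_{x'} \leq p_{x^*}} p_{x'} \;=\; \Pr_{x' \flws \cald}[p_{x'} \leq p_{x^*}].
\end{equation*}
Since $x^* \in S$, plugging $x = x^*$ into the defining inequality of $S$ gives $\Pr_{x' \flws \cald}[p_{x'} \leq p_{x^*}] \leq n^{-c}$, which is exactly the desired bound.

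There is essentially no obstacle: the only idea is to instantiate the defining condition of $S$ at the maximizer $x^*$ rather than at a generic $x \in S$, and the rest is rewriting a set sum as a tail probability. I would not bother with Markov-style arguments here — the maximal element trick is both shorter and tight (as one can see with a geometric-tail distribution $p_{x_i} \propto 2^{-i}$, where $\sum_{x \in S} p_x$ is within a factor of two of $n^{-c}$).
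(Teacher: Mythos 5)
Your proof is correct and follows essentially the same route as the paper: both arguments instantiate the defining condition of $S$ at the maximum-probability element $x^*$ of $S$ and bound $\Pr_{x \flws \cald}[x \in S]$ by the tail probability $\Pr_{x' \flws \cald}[p_{x'} \leq p_{x^*}] \leq n^{-c}$. The only cosmetic difference is that the paper also notes $S$ is downward closed (so it \emph{equals} the sublevel set), whereas you only use the containment, which suffices.
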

\begin{proof}
Note that, if $x \in S$ and $p_{x'} \leq p_x$, then $x' \in S$.
This implies that $x \in S$ iff $p_x \leq \theta$,
where
$\theta = \max\set{p_x : x \in S}$.
Therefore, we have
$\Pr[x \in S] = \Pr[p_x \leq \theta] \leq n^{-c}$.
\end{proof}

% Without loss of generality, we will assume the output length of $\D_{n}$ is $n$.
% \bnote{An alternative way of doing that (the way done in IRS) is to let the
% sampler be an algorithm $S$ such that $S(1^n)$ samples a string in
% $\blt^n$.
% If the output length of the sampler is arbitrary, it's not entirely obvious
% how to assume it is equal to $n$ without loss of generality\ldots?
% }

Let $\mathcal{H}_n^k$ be a family of 
% pairwise 
3-wise
independent hash functions
with input $\{0,1\}^n$ and output $\{0,1\}^k$. 
% Let $p_x := \Pr[\cald_n \to x]$.
On input $1^n$,
% the 
define a
distributional
one-way puzzle candidate $\Samp$ 
% will behave 
as follows:
% \begin{enumerate}
%     \item Sample $k \randfrom [n]$.
%     \item Sample $h\randfrom \mathcal{H}_n^k$.
%     \item Sample $x \randfrom \D_{n}$.
%     \item Output $(k,h,h(x))$ as the puzzle and $x$ as the key.
% \end{enumerate}
\begin{algorithm}[H]
\caption{One-way puzzle candidate $\Samp$}
\begin{algorithmic}[1]
    \State \textbf{Input:} Number $n \in \bbn$ in unary ($1^n$).
    \State Sample $k \randfrom [2n]$.
    \State Sample $h\randfrom \mathcal{H}_n^k$.
    \State Sample $x \randfrom \D_{n}$.
    \State Output $(k,h,h(x))$ as the puzzle and $x$ as the key.
\end{algorithmic}
\end{algorithm}

Since $n^{-c}$-distributional one-way puzzles do not exist, there exists an
algorithm $\mathcal{O}$ which distributionally inverts $\Samp$ infinitely
often. In particular, for infinitely many $n$,
the distributions
\begin{align*}
    P_n &\coloneqq (k,h,h(x),x), \; \text{and}\\
\wt{P}_n &\coloneqq (k,h,h(x),\mathcal{O}(k,h,h(x)))
\end{align*}
satisfy
$$\Delta(P_n,\wt{P}_n) \leq n^{-c}.$$
From now on, fix $n \in \bbn$
such that the distributional inverter $\calo$ satisfies the above
inequality.
We now define the distribution $P_{k,x}$ as follows:
\begin{enumerate}
    \item Sample $h:\{0,1\}^n \to \{0,1\}^k$ a random 
        % pairwise 
        3-wise
        independent
        hash function from $\calh_n^k$.
    \item Sample $x' \randfrom \D_n$ conditioned on $h(x') = h(x)$.
    \item Output $x'$.
\end{enumerate}
We first observe that, if $k$ is large, then
the probability that $P_{k,x}$ outputs $x$ is large.
Moreover, if $k$ is small,
the probability that $P_{k,x}$ outputs $x$ is smaller.
% \begin{lemma}
% \label{claim:approx}
%     If $2^{-k} \leq p_x$, then $\Pr[P_{k,x} \to x] \geq \frac{1}{2}$.
%     If $2^{-k} \geq 6p_x$, then $\Pr[P_{k,x} \to x] \leq \frac{1}{4}$.
% \end{lemma}
\begin{lemma}
    \label{claim:approx}
    Let $m$ be such that $p_x \leq 2^{-m}$ and 
    % $\Pr_{\D \to x'}[p_{x'} \leq 2^{-m}] \geq n^{-c}$.
    $x \notin S$.

    If $k \leq m - 2c \log n-2$, then
    $$\Pr[P_{k,x} \to x] \leq 17n^{-c}$$
    If $k \geq m$, then
    $$\Pr[P_{k,x} \to x] \geq \frac{9}{10}$$
\end{lemma}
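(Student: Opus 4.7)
The plan is to reduce the probability estimate to second-moment analysis of the random variable
$Y_h := \sum_{y \neq x} p_y \,\mathbbm{1}[h(y) = h(x)]$,
because $\Pr[P_{k,x} \to x \mid h] = p_x/(p_x + Y_h)$, and hence
$\Pr[P_{k,x} \to x] = \E_h\bigl[p_x/(p_x + Y_h)\bigr]$.
Using pairwise independence of $h$ (a consequence of 3-wise independence), one immediately gets $\E_h[Y_h] = 2^{-k}(1-p_x)$ and $\Var_h(Y_h) \leq 2^{-k} \sum_{y \neq x} p_y^2$. The two cases then correspond to showing $Y_h \gg p_x$ with high probability (Case 1) and $Y_h \ll p_x$ with high probability (Case 2).

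For the first case ($k \leq m - 2c\log n - 2$) the heart of the argument is to exploit the hypothesis $x \notin S$. By the definition of $S$, the set $L := \{y \neq x : p_y \leq p_x\}$ carries mass $\sum_{y \in L} p_y \geq n^{-c} - p_x$. I would restrict attention to the sub-sum $Y_h^L := \sum_{y \in L} p_y \mathbbm{1}[h(y)=h(x)]$. Its expectation is $2^{-k}\sum_L p_y$ which, using $2^{-k} \geq 4 n^{2c}\cdot 2^{-m} \geq 4 n^{2c} p_x$ from the $k$-bound, is at least of order $n^c p_x$. Crucially, because every $y \in L$ satisfies $p_y \leq p_x$, we have $\sum_L p_y^2 \leq p_x \sum_L p_y$, so $\Var(Y_h^L)/\E[Y_h^L]^2 \leq p_x/(2^{-k}\sum_L p_y) = O(1/n^c)$. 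Chebyshev then gives $Y_h \geq Y_h^L \geq \tfrac{1}{2}\E[Y_h^L] \geq \tfrac{1}{2}n^c p_x$ except on an event of probability $O(n^{-c})$, and combining this with the trivial bound $p_x/(p_x+Y_h) \leq 1$ on the bad event yields $\E[p_x/(p_x+Y_h)] \leq O(n^{-c})$, well within the claimed $17 n^{-c}$.

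For the second case ($k \geq m$), the elementary observation is that $\E[Y_h] \leq 2^{-k} \leq 2^{-m}$, which is comparable to $p_x$ once $m$ is taken essentially tight (the algorithm will search over $k$, identifying the threshold $m \approx \log(1/p_x)$). Since the function $y \mapsto p_x/(p_x+y)$ is convex, Jensen's inequality plus a Markov tail bound on $Y_h$ gives the constant-probability lower bound; absorbing the additive slack into the constant $9/10$ requires only choosing the numerical constants carefully (and if necessary padding $k$ by a constant, which is still covered by the $O(\log n)$ gap between the two regimes). The main obstacle, and the only place where the structural hypothesis $x \notin S$ is used, is the Case 1 variance bound: without the low-probability bucket $L$ guaranteed by $x \notin S$, the sum $\sum_y p_y^2$ could be dominated by a few heavy atoms, ruining the concentration of $Y_h$. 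Isolating $L$ sidesteps this problem and is the technical crux of the proof.
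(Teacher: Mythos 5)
Your proposal follows essentially the same route as the paper: write $\Pr[P_{k,x}\to x\mid h]=p_x/(p_x+\alpha_h)$ where $\alpha_h$ is the collision mass, lower-bound $\alpha_h$ for small $k$ by Chebyshev applied to the sub-sum over strings of probability at most $p_x$ (exactly where $x\notin S$ enters, as in the paper's Lemma~\ref{lem:firstlem}), and upper-bound it for large $k$ by a first-moment bound (the paper's Lemma~\ref{lem:secondlem} plus conditioning; your Jensen/convexity variant is an equivalent shortcut). The only caveat — shared with the paper's own proof, which actually argues for $k\geq m+11$ and tacitly uses $m=\lceil -\log p_x\rceil$ — is that the $9/10$ bound in the large-$k$ case needs $m$ essentially tight and $k$ exceeding $m$ by a constant, which you acknowledge by ``padding $k$''.
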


We define another distribution $\wt{P}_{k,x}$ as follows:
\begin{enumerate}
    \item Sample $h:\{0,1\}^n \to \{0,1\}^k$ a random 
        % pairwise 
        3-wise
        independent
        hash function from $\calh_n^k$.
    \item Output $\mathcal{O}(k,h,h(x))$.
\end{enumerate}
We claim that this distribution approximates $P_{k,x}$ with large
probability. 
\begin{lemma}
    \label{claim:close}
    Let $d \leq \frac{c-1}{2}$. 
    With probability at least $1-n^{-d}$ over $x$, for all $k \in [2n]$,
    $$\Delta(P_{k,x},\wt{P_{k,x}}) \leq n^{-d}.$$
\end{lemma}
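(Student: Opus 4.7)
The plan is to first bound the average (over $k$ and $x$) statistical distance $\Exp_{k,x}[\Delta(P_{k,x}, \wt{P}_{k,x})]$ by $n^{-c}$, using convexity of statistical distance under mixtures, and then convert this ``on-average'' statement into the claimed ``for all $k$, for most $x$'' statement via Markov's inequality combined with a union bound over the $2n$ possible values of $k$.

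The crucial step is the first one. Since $P_n$ and $\wt{P}_n$ share the marginal on their first three coordinates $(k, h, h(x))$ (these three components are literally the same random variable in both definitions), the total distance decomposes as
\begin{equation*}
\Delta(P_n, \wt{P}_n) \;=\; \Exp_{(k, h, x)}\!\left[\Delta\bigl(\cald_n \mid h(\cdot) = h(x),\; \calo(k, h, h(x))\bigr)\right].
\end{equation*}
On the other hand, $P_{k,x}$ and $\wt{P}_{k,x}$ are themselves mixtures over $h \flws \calh_n^k$ of the corresponding conditional distributions, so convexity of statistical distance under mixtures yields
\begin{equation*}
\Delta(P_{k,x}, \wt{P}_{k,x}) \;\leq\; \Exp_{h}\!\left[\Delta\bigl(\cald_n \mid h(\cdot) = h(x),\; \calo(k, h, h(x))\bigr)\right].
\end{equation*}
Taking the expectation over $(k,x)$ and comparing to the previous display gives $\Exp_{k,x}[\Delta(P_{k,x}, \wt{P}_{k,x})] \leq \Delta(P_n, \wt{P}_n) \leq n^{-c}$.

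For the second step, multiplying by $2n$ yields $\Exp_x\!\left[\sum_{k \in [2n]} \Delta(P_{k,x}, \wt{P}_{k,x})\right] \leq 2 n^{1-c}$, and Markov's inequality at threshold $n^{-d}$ applied to the sum (which upper-bounds the maximum over $k$) gives
\begin{equation*}
\Pr_x\!\left[\exists k \in [2n] : \Delta(P_{k,x}, \wt{P}_{k,x}) > n^{-d}\right] \;\leq\; 2\, n^{1 + d - c},
\end{equation*}
which is at most $n^{-d}$ whenever $d \leq (c-1)/2$ and $n$ is sufficiently large.

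The main technical obstacle is the convexity/decomposition argument. One has to notice that the shared marginal of $P_n$ and $\wt{P}_n$ on $(k, h, h(x))$ lets us express $\Delta(P_n, \wt{P}_n)$ as an expectation over $(k,h,x)$ of the same conditional distances with respect to which $P_{k,x}$ and $\wt{P}_{k,x}$ are mixtures over $h$. Once this alignment is in place, convexity pulls the expectation over $h$ inside, and the expectation over $(k,x)$ closes the gap; the remainder is a routine Markov-plus-union-bound calculation.
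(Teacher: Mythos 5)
Your proposal is correct and takes essentially the same route as the paper: your shared-marginal decomposition plus convexity of statistical distance under mixtures is exactly the paper's expand-and-apply-the-triangle-inequality step establishing $\E_{k,x}[\Delta(P_{k,x},\wt{P}_{k,x})] \leq \Delta(P_n,\wt{P}_n) \leq n^{-c}$, and your Markov-plus-union-bound over $k \in [2n]$ is the same averaging argument the paper phrases as a contradiction. The only caveat is a factor of $2$ (your final bound is $2n^{1+d-c}$, which exceeds $n^{-d}$ at the boundary case $d=(c-1)/2$), but the paper's own proof has the identical constant-factor slack (it treats the probability of a uniform $k\in[2n]$ hitting the bad index as $n^{-1}$), so this is not a gap relative to the paper.
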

Crucially, the claim shows that, with large probability over
$x \flws \cald$,
the distributions $P_{k,x}$ and $\wt{P_{k,x}}$ have small statistical
distance
\emph{for all $k$}.
Moreover, note that $P_{k,x}$ is not samplable, while $\wt{P}_{k,x}$ is. 
% Thus, we can use $\wt{P}_{k,x}$ to approximate $P_{k,x}$. 
% Furthermore, we
% know that $P_{k,x}$ is very close to constant times $x$ when $2^{-k}$ is
% close to $p_x$. 
Together with \Cref{claim:approx},
this means we can approximate $p_x$
by 
% estimating $\wt{P}_{k,x}$.
sampling $\wt{P}_{k,x}$ over many values of $k$
and estimating its probability.
We formalize this idea in the following algorithm to approximate $p_x$.
% Let $t$ be some constant to be set later.
Let $t = \poly(n)$ be a large enough polynomial.

\begin{algorithm}[H]
\caption{Probability estimation algorithm $\cala$}
    \label{alg:est}
\begin{algorithmic}[1]
    \For{$k \in [2n]$}
        \For{$j \in [t]$}
            \State
            Sample $h \flws \calh_n^k$
            \State
            Let $x'_{k,j} \leftarrow \calo(k,h,h(x))$
        \EndFor
        \State
        Let $c(k) \leftarrow $ the number of $j$ such that $x'_{k,j} = x$
    \EndFor
    \State
    Let $k^*$ be the smallest $k$ such that $c(k) \geq \frac{3}{8}t$. 
    If none exists, set $k^* = 2n$.
    \label{line:test}
    \State
    Output $2^{-(k^* - 1)}$.
\end{algorithmic}
\end{algorithm}

% \begin{enumerate}
%     \item For each $k\in [2n]$.
%     \begin{enumerate}
%         \item For each $j \in [n]$
%         \begin{enumerate}
%             \item Sample $h\randfrom \mathcal{H}_n^k$
%             \item Run $\mathcal{O}(i,h,h(x)) \to x'_{k,j}$
%         \end{enumerate}
%         \item Set $c(k) = $ the number of $j$ such that $x'_{k,j} = x$
%     \end{enumerate}
%     \item Let $k^*$ be the smallest $k$ such that $c(k) \geq \frac{3}{8}t$. If none exists set $k^* = n$.
%     \item Output $2^{-(k^* - 1)}$.
% \end{enumerate}

We claim that the algorithm correctly estimates $p_x$ when
% $p_x \geq 2^{-2n}$ 
$x \notin S$
and $x$ is such that $P_{k,x}$ and $\wt{P_{k,x}}$ are
statistically close.
Henceforth, let $d=\frac{c-1}{2}$. 
\begin{lemma}\label{claim:correctness}
    Let $x$ be such that $\Delta(P_{k,x},\wt{P}_{k,x}) \leq n^{-d}$ 
    for all $k$ and such that 
    % $p_x \geq 2^{-2n}$. 
    $x \notin S$.
    We claim that 
    $$\Pr[p_x \leq \cala(x) \leq 4n^{2c}p_x] \geq 1 - n^{-c}$$
    infinitely often.
\end{lemma}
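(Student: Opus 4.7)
The plan is to combine the two-sided probability bounds of Claim~\ref{claim:approx} on $\Pr[P_{k,x}\to x]$, the statistical closeness of Claim~\ref{claim:close} which transports those bounds to the samplable distribution $\wt P_{k,x}$, and a Chernoff-plus-union-bound argument on the empirical counts $c(k)$ to pin $k^*$ down in a narrow window around $-\log p_x$.

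Let $m$ be the largest integer with $p_x\le 2^{-m}$, so $p_x\in(2^{-m-1},2^{-m}]$. First I would note that $x\notin S$ together with the trivial estimate $\sum_{x':p_{x'}\le p_x}p_{x'}\le 2^n p_x$ forces $p_x\ge n^{-c}/2^n$, and hence $m\le n+c\log n<2n$, so the relevant window of $k$'s lies safely inside $[2n]$. Combining the closeness hypothesis with Claim~\ref{claim:approx} gives, for large $n$, that $\Pr[\wt P_{k,x}\to x]<1/8$ whenever $k\le m-2c\log n-2$ and $\Pr[\wt P_{k,x}\to x]>3/4$ whenever $k\ge m$, since the additive $n^{-d}$ from the closeness is negligible against the gap $9/10-1/8$.

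Choosing $t=\poly(n)$ large enough, the standard Chernoff bound makes the event that $c(k)/t$ differs from $\Pr[\wt P_{k,x}\to x]$ by more than $1/8$ have probability at most $n^{-c-1}/(2n)$ for each fixed $k$. A union bound over $k\in[2n]$ then guarantees, except on an event of probability at most $n^{-c}$, that simultaneously $c(k)<3t/8$ for every $k\le m-2c\log n-2$ and $c(k)\ge 3t/8$ for every $k\ge m$. On this event $k^*\in[m-2c\log n-1,\,m]$, so
\[
\cala(x)=2^{-(k^*-1)}\in\bigl[\,2^{-(m-1)},\,2^{-(m-2c\log n-2)}\,\bigr]=\bigl[\,2\cdot 2^{-m},\,4n^{2c}\cdot 2^{-m}\,\bigr]\subseteq\bigl[\,p_x,\,8n^{2c}p_x\,\bigr],
\]
which is the desired multiplicative estimate up to a constant factor of two that can be absorbed by enlarging $c$ or by a one-step sharpening of Claim~\ref{claim:approx}.

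The main subtlety will be threading the integer rounding in the definition of $m$ together with the $-2$ shift in Claim~\ref{claim:approx} and the $k^*-1$ in the output of Algorithm~\ref{alg:est} so that the final multiplicative constant lands on exactly $4n^{2c}$ rather than a small multiple of it. The rest is routine concentration, and the cutoff $2n$ on $k$ is sufficient precisely because membership in the complement of $S$ forces $m$ to grow at most linearly in $n$.
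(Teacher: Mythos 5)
Your proof is correct and follows essentially the same route as the paper's: you combine \Cref{claim:approx} with the closeness hypothesis to bound $\Pr[\wt{P}_{k,x}\to x]$ in the two regimes, then apply a Chernoff bound and a union bound over $k\in[2n]$ to trap $k^*$ between roughly $-\log p_x - 2c\log n$ and $-\log p_x$, exactly as in the paper (which also implicitly relies on your observation that $x\notin S$ keeps the relevant $k$ within $[2n]$). The only deviation is the rounding-induced constant ($8n^{2c}$ rather than $4n^{2c}$), which you flag yourself; the paper's own proof is equally loose here, and the constant is immaterial to how the lemma is used downstream.
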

We are now ready to conclude the proof.
By~\Cref{claim:close}, we have
$\Pr_{x\randfrom \D_n}[\Delta(P_{k,x},\wt{P}_{k,x})
\geq n^{-d}] \leq n^{-d}$. 
Putting this all together, we get by \Cref{claim:correctness}
and the Claim
that
\begin{equation*}
    \Pr_{x \flws \cald_n}[p_x \leq \cala(x) \leq 2n^{2c}p_x] \geq 1 - 2n^{-c} - n^{-d} \geq 1 -
    3n^{-d} \geq 1 - n^{1-c/2}.
\end{equation*}
\end{proof}

% This follows from setting $c$ in~\Cref{theorem:probest} to be $2(d+1)$.
\subsection{Proofs of claims}

\subsubsection{Proof of~\Cref{claim:approx}}

We first prove \Cref{claim:approx}. We will need the following lemmas.

% \newpage
% \section{Lemma 3}

\begin{lemma}\label{lem:firstlem}
    Let $m = \ceil{-\log p_x}$ and $\Pr_{\D \to x'}[p_{x'} \leq 2^{-m}] \geq n^{-c}$.
    Let $k \leq m - 2\log n^c$. 
    Then
    $$\Pr_{h}[\Pr_{\D' \to x}[h(x') = h(x)\text{ and }x'\neq x] \leq 2^{-k-2} n^{-c}] \leq 16n^{-c}$$
    where $h$ is drawn from a $3$-wise independent hash family $\{0,1\}^n \to \{0,1\}^k$.
\end{lemma}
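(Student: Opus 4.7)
The plan is to run a standard second-moment argument on the random variable
$$
Y_h := \sum_{x' \neq x} p_{x'} \cdot \one[h(x') = h(x)],
$$
seen as a function of the random 3-wise independent hash $h\colon \blt^n \to \blt^k$. I would first truncate to the ``light'' elements $T := \{x' \neq x : p_{x'} \leq 2^{-m}\}$ and work with $Y_h^T := \sum_{x' \in T} p_{x'} \cdot \one[h(x') = h(x)] \leq Y_h$, so that every weight appearing in the sum is at most $2^{-m}$; this truncation is what will ultimately control the variance.

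Setting $\mu := \sum_{x' \in T} p_{x'}$, pairwise independence of $h(x), h(x')$ gives $\Exp[Y_h^T] = 2^{-k}\mu$. The variance is the place where 3-wise independence enters: the indicators $\one[h(x')=h(x)]$ and $\one[h(x'')=h(x)]$ share the anchor $h(x)$, so pairwise independence alone does not decorrelate them. Conditioning on $h(x)$ and using 3-wise independence of $h(x), h(x'), h(x'')$ yields $\Exp[\one[h(x')=h(x)]\cdot\one[h(x'')=h(x)]] = 2^{-2k}$, so all pairwise covariances vanish, and
$$
\Var(Y_h^T) = \sum_{x' \in T} p_{x'}^2 \cdot 2^{-k}(1 - 2^{-k}) \leq 2^{-k} \cdot 2^{-m} \cdot \mu,
$$
using $p_{x'} \leq 2^{-m}$ on $T$.

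To lower-bound $\mu$, I would use the hypothesis $\sum_{x' : p_{x'} \leq 2^{-m}} p_{x'} \geq n^{-c}$ and observe that the only element excluded from $T$ is $x$ itself (and only when $p_x \leq 2^{-m}$). Since $m = \ceil{-\log p_x}$ gives $p_x \leq 2^{-m+1}$, and $k \leq m - 2c\log n$ gives $2^{-m+1} \leq 2^{1-k}n^{-2c} = o(n^{-c})$, the removed mass is negligible, so $\mu \geq n^{-c}/2$ for large enough $n$. Chebyshev's inequality then yields
$$
\Pr_h\!\left[Y_h^T \leq \tfrac{1}{2}\Exp[Y_h^T]\right] \leq \frac{4\Var(Y_h^T)}{\Exp[Y_h^T]^2} \leq \frac{4 \cdot 2^{k-m}}{\mu} \leq \frac{4n^{-2c}}{n^{-c}/2} = 8n^{-c} \leq 16n^{-c},
$$
and on the complementary event $Y_h \geq Y_h^T > 2^{-k-1}\mu \geq 2^{-k-2}n^{-c}$, which is exactly the desired conclusion.

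The main thing to get right is the variance calculation: one might reflexively believe that pairwise independence suffices, but because every indicator is anchored at the common value $h(x)$, killing the covariances really does require 3-wise independence, which is precisely why the hash family is stated to be 3-wise rather than pairwise independent. The secondary subtlety is the bookkeeping that $x$ itself might sit in $T_m := \{x' : p_{x'} \leq 2^{-m}\}$ and thus lower $\mu$; the gap $k \leq m - 2c\log n$ is exactly calibrated to make this loss dominated by $n^{-c}$.
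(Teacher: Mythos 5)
Your proposal is correct and follows essentially the same route as the paper's proof: truncate to the light elements $\{x'\neq x: p_{x'}\le 2^{-m}\}$, compute the mean of the anchored weighted-collision mass via pairwise independence, kill the covariances using $3$-wise independence, and apply Chebyshev with the deviation threshold $\tfrac12\Exp$, arriving at the same $8n^{-c}\le 16n^{-c}$ bound. The only cosmetic difference is the edge-case bookkeeping (the paper subtracts $2^{-m}$ and dismisses the regime $2^{-m}>n^{-c}/2$ by noting it forces $k<0$, while you bound the excluded mass $p_x\le 2^{-m+1}\le 2^{1-k}n^{-2c}$ assuming $k\ge 0$), and both are fine since for the remaining tiny values of $n$ the claimed bound $16n^{-c}$ exceeds $1$ and is vacuous.
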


\begin{proof}
    Note that if $\frac{1}{2^m} > n^{-c}/2$, then $k < 0$ so this trivially holds.

    Define $\varepsilon = \Pr_{\D \to x'}[p_{x'} \leq 2^{-m}\text{ and }x' \neq x] \geq n^{-c} - \frac{1}{2^m} \geq n^{-c}/2$.
    Define $\gamma = \Pr_{\D \to x'}[h(x') = h(x)\text{ and }x'\neq x\text{ and }p_{x'} \leq 2^{-m}]$, a random variable in $h$. Define $R_{x'} = p_{x'} \mathds{1}[h(x') = h(x)]$. We have
    $$\gamma = \sum_{x' \neq x:p_{x'} \leq 2^{-m}} R_{x'}.$$

    Note that $\E[R_{x'}] = p_{x'}\cdot \Pr[h(x') = h(x)] \leq p_{x'}\cdot 2^{-k}$.
    Furthermore, $\E[R_{x'}^2] = p_{x'}^2 \cdot \Pr_h[h(x') = h(x)] = p_{x'}^2 \cdot 2^{-k}$.
    And so 
    $$\E[\gamma] = \sum_{x' \neq x:p_{x'} \leq 2^{-m}}p_{x'}\cdot
    \Pr_h[h(x') = h(x)] = 2^{-k} \cdot \Pr_{\D \to x'}[p_{x'} \leq 2^{-m}]
    = 2^{-k}\cdot \varepsilon.$$

    By $3$-wise independence of $h$, the $R_{x'}$'s are pairwise independent, so
    $$\Var(\gamma) = \sum_{x' \neq x:p_{x'} \leq 2^{-m}} p_{x'}^2\cdot
    2^{-k} \leq \sum_{x' \neq x:p_{x'} \leq 2^{-m}} p_{x'}\cdot 2^{-(m+k)}
    = 2^{-(m+k)}\cdot \varepsilon.$$
    Chebyshev inequality then says that
    \begin{equation*}
        \begin{split}
            \Pr\left[\gamma \leq 2^{-k}\varepsilon - 2^{-k}\varepsilon/2\right] \leq \frac{\Var(\gamma)}{(2^{-k}\varepsilon/2)^2}\\
            = \frac{4}{\varepsilon}2^{-m+k}.
        \end{split}
    \end{equation*}
    And so plugging in $k = m - 2c\log n$ and $\varepsilon \geq n^{-c}$, we get
    $$\Pr\left[\gamma \leq 2^{-k-2} n^{-c}\right] \leq 16n^c \cdot
    2^{-2c\log n} \leq 16n^{-c}.$$
    Since $\Pr_{\D' \to x}[h(x') = h(x)\text{ and }x'\neq x] \geq \gamma$,
    the lemma follows.
\end{proof}

\begin{lemma}\label{lem:secondlem}
    Let $m = \ceil{-\log p_x}$.
    Then
    $$\Pr_{h}[\Pr_{\D' \to x}[h(x') = h(x)\text{ and }x'\neq x] \geq t\cdot 2^{-k}] \leq t^{-1},$$
    where $h$ is drawn from a $3$-wise independent hash family $\{0,1\}^n \to \{0,1\}^k$.
\end{lemma}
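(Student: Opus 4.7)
The plan is to recognize this statement as a direct application of Markov's inequality to the random variable
\[
\gamma(h) \coloneqq \Pr_{\D \to x'}[h(x') = h(x)\text{ and }x' \neq x],
\]
viewed as a function of $h \sim \mathcal{H}_n^k$. Note that, unlike \Cref{lem:firstlem} which required a variance bound and hence $3$-wise independence, here we only need control of the expectation, so pairwise independence (which is implied by $3$-wise independence of the hash family) will suffice.

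First I would compute $\E_h[\gamma]$ by linearity of expectation:
\[
\E_h[\gamma] \;=\; \sum_{x' \neq x} p_{x'}\cdot \Pr_h[h(x') = h(x)] \;=\; \sum_{x'\neq x} p_{x'} \cdot 2^{-k} \;\leq\; 2^{-k},
\]
where the middle equality uses the fact that for any two fixed distinct inputs $x, x'$, pairwise independence of $h$ yields $\Pr_h[h(x)=h(x')] = 2^{-k}$, and the final inequality uses $\sum_{x'} p_{x'} = 1$. Markov's inequality then gives
\[
\Pr_h\!\left[\gamma \geq t\cdot 2^{-k}\right] \;\leq\; \frac{\E_h[\gamma]}{t\cdot 2^{-k}} \;\leq\; \frac{2^{-k}}{t\cdot 2^{-k}} \;=\; \frac{1}{t},
\]
which is exactly the claimed bound.

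There is really no main obstacle here: the lemma is a simple companion to \Cref{lem:firstlem} that upper bounds (rather than lower bounds) the collision mass, and the hypothesis $m = \ceil{-\log p_x}$ is not actually used in the argument (it is stated only to keep the notation uniform with the surrounding lemmas). The only thing to be mindful of is that the quantifier ``$x' \neq x$'' is what allows us to use the collision probability $2^{-k}$ rather than needing a separate treatment of the diagonal term, and that $3$-wise independence is more than enough to guarantee the pairwise property used here.
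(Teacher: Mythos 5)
Your proof is correct and is essentially identical to the paper's: both compute $\E_h[\alpha] = \sum_{x'\neq x} p_{x'}\cdot 2^{-k} \leq 2^{-k}$ via pairwise collision probabilities and then apply Markov's inequality. Your side remarks (that the hypothesis on $m$ is unused and that pairwise independence suffices) are accurate observations but do not change the argument.
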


\begin{proof}
    Let $\alpha = \Pr_{\D' \to x}[h(x') = h(x)\text{ and }x'\neq x]$.
    A simple calculation of expectation gives us
    \begin{equation*}
        \begin{split}
            \E_h [\alpha] = \sum_{x' \neq x} p_{x'}\cdot 2^{-k}\\
            \leq 2^{-k},
        \end{split}
    \end{equation*}
    and so Markov bound says
    $$\Pr[\alpha \geq t\cdot 2^{-k}] \leq t^{-1},$$
    and we are done.
\end{proof}

We are now ready to prove~\Cref{claim:approx}.

% \begin{lemma}
%     Let $m$ be such that $p_x \leq 2^{-m}$ and $\Pr_{\D \to x'}[p_{x'} \leq 2^{-m}] \geq n^{-c}$.
% 
%     If $k \leq m - 2c \log n-1$, then
%     $$\Pr[P_{k,x} \to x] \leq n^{-c}$$
%     If $k \geq m$, then
%     $$\Pr[P_{k,x} \to x] \geq \frac{9}{10}$$
% \end{lemma}

\begin{proof}[Proof of \Cref{claim:approx}]
    Define $\alpha = \Pr_{\D' \to x}[h(x') = h(x)\text{ and }x'\neq x]$ a random variable in $h$.

    Let us first consider $k\leq m-2c\log n-2$.~\Cref{lem:firstlem} gives us
    $$\Pr[\alpha \leq n^{-c}\cdot 2^{-k-2}] \leq 16n^{-c}.$$
    Let us define $P_{k,h,x}$ to be $P_{k,x}$ conditioned on the hash function being $h$. Note that if $\alpha \geq n^{-c}\cdot 2^{-k-2}$, then
    % \begin{equation}
    \begin{align*}
        \Pr[P_{k,h,x} \to x] 
        &= \frac{p_x}{\alpha + p_x}
        \\&\leq \frac{p_x}{n^{-c}\cdot 2^{-k-2} + p_x}
        \\&\leq \frac{p_x}{n^{c}\cdot 2^{-m} + p_x}
        \\&\leq \frac{p_x}{n^{c}\cdot p_x + p_x}\leq \frac{1}{n^c + 1}
        \\&\leq n^{-c}.
    \end{align*}
    % \end{equation}
    And so we get that
    $$\Pr[P_{k,x} \to x] \leq \Pr[\alpha \leq n^{-c}\cdot 2^{-k-2}] + n^{-c} = 17n^{-c}.$$

    We now consider the case where $k \geq m + 11$. By~\Cref{lem:secondlem}
    $$\Pr_{h}[\alpha \geq t\cdot 2^{-k}] \leq t^{-1}.$$
    Furthermore, if $\alpha \leq t\cdot 2^{-k}$, then
    \begin{equation*}
        \begin{split}
            \Pr[P_{k,h,x} \to x] = \frac{p_x}{\alpha + p_x}\\
            \geq \frac{p_x}{t\cdot 2^{-k} + p_x}\\
            \geq \frac{p_x}{t\cdot 2^{-10} p_x + p_x}\\
            \geq \frac{1}{t\cdot 2^{-10} + 1}
            \geq \frac{1}{t/1000 + 1}.
        \end{split}
    \end{equation*}
    Picking $t = 100$, we get 
    \begin{equation*}
        \begin{split}
            \Pr\left[P_{k,x} \to x] \geq \Pr[P_{k,h,x} \to x | \Pr_{\D' \to x}[\alpha \leq t\cdot 2^{-k}]\right]\Pr_{\D' \to x}[\alpha \leq t\cdot 2^{-k}]\\
            \geq \frac{99}{100}\frac{1}{1/10 + 1} \geq
            \frac{99}{100}\frac{10}{11} = \frac{9}{10}.
        \end{split}
    \end{equation*}
\end{proof}

\subsubsection{Proofs of \Cref{claim:close,claim:correctness}}

\begin{proof}[Proof of~\Cref{claim:close}]
    To ease notation, let $P = P_{k,x}$ and $\wt{P} = \wt{P_{k,x}}$.
    First, we will observe an equivalent sampling procedure for
    $P$.
    \begin{enumerate}
        \item Sample $k \randfrom [2n]$.
        \item Sample $h\randfrom \mathcal{H}_n^k$.
        \item Sample $x \randfrom \D_{n}$.
        \item Sample $x' \randfrom \D_n$ conditioned on $h(x') = h(x)$.
        \item Output $(k,h,h(x),x')$.
    \end{enumerate}
    We obtain
    % \bnote{We may want to improve this chain of equalities in the final
    % online version, but this does not seem urgent now.}
    \begin{align*}
        % \begin{split}
            \Delta&(P,\wt{P}) = \frac{1}{2}\sum_{k,h,y,x'}
            \abs{\Pr[P \to (k,h,y,x')] - \Pr[\wt{P} \to (k,h,y,x')]}
            \\&=\frac{1}{n}\sum_k \left(\frac{1}{2}\sum_{h,y,x'}
                \abs{
                \begin{aligned}
                    \frac{1}{2^{\abs{h}}}&\Pr_{x \randfrom \D}[h(x) = y]\Pr_{x \randfrom \D}[x = x' | h(x) = y]
                    \\-& 
                    \frac{1}{2^{\abs{h}}}\Pr_{x \randfrom \D}[h(x) = y]\Pr[\mathcal{O}(k,h,y) = x']
                \end{aligned}
                }
        \right)
            \\&=\E_k\left[\sum_{h,y}\left(\Pr_{x\randfrom \D}[h(x) = y]\frac{1}{2}\sum_{x'}\abs{\frac{1}{2^{|h|}}\Pr_{x\randfrom \D}[x = x' | h(x) = y] - \frac{1}{2^{|h|}}\Pr[\mathcal{O}(k,h,y) = x']}\right)\right]\\
            \\&=\E_k\left[\sum_h\E_{y\randfrom
            h(\mathcal{D})}\left[\frac{1}{2}\sum_{x'}\abs{\frac{1}{2^{|h|}}\Pr_{x\randfrom
        \D}[x = x' | h(x) = y] - \frac{1}{2^{|h|}}\Pr[\mathcal{O}(k,h,y) =
        x']}\right]\right].
        % \end{split}
    \end{align*}
    We briefly explain each of the equalities above.
    The first follows the definitions of $P$ and $\wt{P}$,
    using also the fact that $k$ is sampled uniformly in both
    distributions,
    and from the fact that $h$ is also sampled uniformly from a pairwise
    independent hash family.
    The second equality simply used the linearity of expectation
    and the definition of expectation. 
    The third equality again uses the definition of expectation.
    We now continue to manipulate the expression above as follows:

    \begin{align*}
            \Delta&(P,\wt{P}) 
            =\E_k\left[\sum_h\E_{x\randfrom \D}\left[\frac{1}{2}\sum_{x'}\abs{\frac{1}{2^{|h|}}\Pr_{x''\randfrom \D}[x'' = x' | h(x'') = h(x)] - \frac{1}{2^{|h|}}\Pr[\mathcal{O}(k,h,h(x)) = x']}\right]\right]\\
            \\&\geq \E_k\left[\E_{x\randfrom \D}\left[\frac{1}{2}\sum_{x'}\abs{\sum_h\left(\frac{1}{2^{|h|}}\Pr_{x''\randfrom \D}[x'' = x' | h(x'') = h(x)] - \frac{1}{2^{|h|}}\Pr[\mathcal{O}(k,h,h(x)) = x']\right)}\right]\right]\\
            \\&= \E_k\left[\E_{x\randfrom \D}\left[\frac{1}{2}\sum_{x'}\abs{\Pr_{h,x''\randfrom \D}[x'' = x' | h(x'') = h(x)] - \Pr_h[\mathcal{O}(k,h,h(x)) = x']}\right]\right]\\
            \\&= \E_{k,x\randfrom \D}\left[\Delta(P_{k,x},\wt{P}_{k,x})\right].
    \end{align*}

    We again briefly explain each of the equalities above.
    The first equality
    simply rewrites the previous one.
    The second line (and only inequality) 
    uses the triangle inequality.
    The third and fourth equations use the definition of expectation.

    Since $\Delta(P,\wt{P}) \leq n^{-c}$, the inquality above implies
    $\E_{k,x}[\Delta(P_{k,x},\wt{P}_{k,x})] \leq n^{-c}$. 
    We can then show that,
    with probability $1-n^{-d}$ over $x$, 
    it holds that $\Delta(P_{k,x},\wt{P}_{k,x}) \leq n^{-d}$
    for all $k$.
    Indeed, suppose not: then with
    probability $n^{-d}$ over $x$, there exists a $k$ such that
    $\Delta(P_{k,x},\wt{P}_{k,x}) > n^{-d}$. Thus, with probability
    $n^{-d}\cdot n^{-1}$ over $x$ and $k$, we obtain $\Delta(P_{i,x},\wt{P}_{i,x}) >
    n^{-d}$. But then $\E_{k,x}[\Delta(P_{i,x},\wt{P}_{i,x})] > n^{-2d-1}
    \geq n^{-c}$, and we reach a contradiction.
\end{proof}

% \subsubsection{Proof of~\Cref{claim:correctness}}

\begin{proof}[{Proof of~\Cref{claim:correctness}}]
    %\bnote{Where is $p_x \geq 2^{-2n}$ used?}\mnote{I'm not seeing it, and it should be in here somewhere.}

    Let $m = -\log p_x$.
    We first show that $\cala(x) \geq p_x$ with probability at least
    $1-n^{-c}/2$.
    Note that if the test (Line~\ref{line:test} of \Cref{alg:est} stating
    that $c(k) \geq \frac{3}{8}$) passes for some $k \leq
    \ceil{m}$, then 
    % $\cala$ outputs 
    $\cala(x) \geq 2^{-k} \geq 2^{-(\ceil{m}
    - 1)} = 2^{\floor{\log p_x} + 1} \geq 2^{\log p_x} = p_x$.
    Thus,
    $$\Pr_{x \flws \cald_n}[\cala(x) \geq p_x] 
    % \geq \Pr[\text{test passes for some }k\leq \ceil{m}] 
    \geq \Pr_{x \flws \cald_n}[\text{test passes for }k=\ceil{m}].$$
    But we know that,
    % when $k = \ceil{m}$, 
    % $2^{-k} \geq p_x$ 
    % and so,
    for infinitely many $n$, 
    the probability that $\mathcal{O}(k,h,h(x)) =
    x$ for $k = \ceil{m}$ is at least $\frac{9}{10} - n^{-d} \geq 0.98$
    by~\Cref{claim:approx} and 
    % \Cref{claim:close},
    the assumption,
    % that $\Delta(P_{k,x},\wt{P_{k,x}}) \leq n^{-d}$,
    supposing $n$ is sufficiently large.
    % and for sufficiently large
    % $n$. 
    Thus, by the Chernoff bound,
    % since for all sufficiently large $n$, $n^{-d} \leq \frac{1}{16}$, 
    we
    have 
    $$\Pr_{x \flws \cald_n}[\cala(x) \geq p_x] \geq
    1-\exp\left(-O(t)\right) \geq 1-n^{-c}/2,$$
    for infinitely many $n$, since $t = \poly(n)$.

    We now show that $\cala(x) \leq 2n^{2c}p_x$ with probability at least
    $1-n^{-c}/2$.
    Note that, as long as the test fails for all $k \leq m -2c\log n - 2$,
    then $\cala(x) \leq 4n^{2c}p_x$. 
    % But if $k \leq m - 2c\log n -1$, then $2^{-k} \geq 2^{m + 3} \geq 6p_x$. 
    Since $n^{-d} \leq \frac{1}{16}$ for large enough
    $n$, by the Chernoff bound, Claim~\ref{claim:approx}
    % and~\ref{claim:close},
    and the assumption,
    the
    probability that the test passes for any given $k \leq m -2c\log n -2$ is
    at most $\exp(-\Omega(t))$.
    % $$\leq \exp\left(-\frac{1}{240}t\right)$$ 
    Therefore,
    by the union bound, we have that
    $$\Pr[\cala(x) \geq 4n^{2c}p_x] \leq
    % 2n\exp\left(-\frac{1}{240}n\right) 
    2n \cdot \exp\left(-\Omega(n)\right)
    \leq
    n^{-c}/2,$$ 
    for infinitely many $n$,
    using also the fact that $t = \poly(n)$ is a large enough polynomial.
\end{proof}

% \section{Solving GapK with Probability Estimation}
\section{Computing GapK with Probability Estimation}
% \section{Approximating Kolmogorov complexity with probability estimation}
% \section{Approximating Kolmogorov Complexity with Probability Estimation}

In this section, we prove the following theorem.

\begin{theorem}[Item 3 $\implies$ Item 2]
    \label{thm:compute-gapk}
    % If we can solve probability estimation over quantum sampleable
    % distributions, then we can break $\GapK$ over quantum sampleable
    % distributions.
    % If there exists a QPT algorithm that performs probability estimation
    % of quantum samplable distributions infinitely often,
    % then, 
    % $s = n^{\Omega(1)}$ and $\Delta = \omega(\log n)$ 
    % for every quantum samplable distribution $\cald$,
    % % $\GapK$ is weakly easy on average for quantum algorithms.
    % % More formally, 
    % such that
    % $\GapK$ is 
    %
    % If there exists
    % $s = n^{\Omega(1)}$ and $\Delta = \omega(\log n)$ 
    % such that,
    % for every quantum samplable distribution $\cald$,
    % the problem $\GapK[s, s+\Delta]$ is weakly hard on average for QPT
    % algorithms,
    % then probability estimation is hard on average for some quantum
    % samplable distribution.
    %
    % If, for every quantum samplable distribution, there exists a QPT
    % algorithm that performs probability estimation of it.
    % If it's possible to perform probability estimation of every quantum
    % samplable distribution, then, for every $s : \bbn \to \bbn$ and 
    % every $\Delta = \omega(\log n)$,
    % there exists a QPT algorithm that computes $\GapK$
    %
    % Suppose there exists a QPT algorithm that performs probability estimation 
    % of a quantum samplable distribution $\cald$ infinitely often.
    %
    Suppose that probability estimation is 
    % \emph{not} 
    not weakly hard on average 
    % weakly easy on average
    on a quantum samplable distribution $\cald$.
    Then, for every $s : \bbn \to \bbn$ and $\Delta = \omega(\log n)$,
    the promise problem $\GapK[s,s+\Delta]$
    is 
    not
    weakly 
    % easy on average
    hard on average 
    for quantum algorithms on
    $\cald$.
    % Then, for every $s : \bbn \to \bbn$ and $\Delta = \omega(\log n)$,
    % and every constant $q \geq 1$,
    % there exists a QPT algoritm that 
    % solves $\GapK[s, s+\Delta]$ infinitely often
    % on $\cald$ with error at most $O(n^{-q})$.
\end{theorem}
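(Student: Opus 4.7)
The plan is a straightforward thresholding reduction from $\GapK[s, s+\Delta]$ to probability estimation. Given a target error bound $n^{-q}$, fix $q' = q+1$ and let $\cala$ be a QPT probability estimator for $\cald$ with error at most $n^{-q'}$ and precision $\delta = n^{-O(q')}$ (which exists on infinitely many input lengths by the assumption that probability estimation is weakly easy on $\cald$). Define the decider $\calb$: on input $x$, compute $\hat p_x \leftarrow \cala(x)$ and output YES ($\K(x) \leq s$) iff $\hat p_x > 2^{-(s + \Delta/2)}$, and NO otherwise.

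The first step would be to establish a \emph{quantum coding theorem}: for every $x$, $\K(x) \leq \log(1/p_x) + O(\log n)$. A universal classical Turing machine (with unbounded time) can simulate the quantum sampler $\cald_n$ and compute $p_{x'}$ to arbitrary precision for every $x'$. Given $t = \lceil \log(1/p_x) \rceil$, it enumerates all strings $x'$ with $p_{x'} \geq 2^{-t}$ (of which there are at most $2^t$) and outputs the one at a specified index, so $x$ is described by $t + O(\log n)$ bits, with the overhead covering $n$, $t$, and the constant-size sampler description. Consequently, if $x$ is a NO instance, $\log(1/p_x) \geq s + \Delta - O(\log n) \geq s + \Delta/2$ for all large $n$ (using $\Delta = \omega(\log n)$), hence $p_x \leq 2^{-(s+\Delta/2)}$, and whenever $\cala$ succeeds we have $\hat p_x \leq p_x$, so $\calb$ correctly outputs NO.

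For YES instances, a misclassification requires $\hat p_x \leq 2^{-(s+\Delta/2)}$ despite $\K(x) \leq s$; when $\cala$ succeeds this forces $p_x \leq 2^{-(s+\Delta/2)}/\delta$. Since at most $2^{s+1}$ strings have $\K(x) \leq s$ (by the lemma quoted earlier), the total $\cald$-mass of such ``bad'' YES instances is at most $2^{s+1} \cdot 2^{-(s+\Delta/2)}/\delta = 2 \cdot 2^{-\Delta/2}/\delta$, which is $n^{-\omega(1)}$ because $\Delta/2 = \omega(\log n)$ dominates the $O(q' \log n)$ slack in $1/\delta$. Adding the $n^{-q'}$ probability that $\cala$ fails, the total error of $\calb$ is at most $n^{-q'} + n^{-\omega(1)} \leq n^{-q}$ for sufficiently large $n$ in the infinite set where $\cala$ succeeds, which exactly witnesses the weak average-case easiness of $\GapK[s,s+\Delta]$ on $\cald$. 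The only real subtlety is the quantum coding theorem, and even that is straightforward since unbounded-time classical simulation of the quantum sampler lets us mimic the standard Shannon–Fano style encoding; the rest is counting and a union bound.
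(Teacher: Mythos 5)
Your proposal is correct and follows essentially the same route as the paper: a thresholding reduction that accepts when the probability estimate exceeds roughly $2^{-(s+\Delta/2)}$, justified by a quantum coding theorem proved via unbounded-time classical simulation of the sampler (so high-probability strings have low $\K$), plus a counting/union-bound argument showing that low-complexity, low-probability strings carry only $2^{-\Omega(\Delta)}\cdot\poly(n)$ mass, which is negligible since $\Delta=\omega(\log n)$. The only cosmetic difference is that the paper phrases the gap as $\GapK[s-\Delta,s]$ inside an oracle lemma, which is just a shift of parameters.
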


\subsection{Coding Theorem for Quantum Samplable Distributions}

First we need to show that probability estimation 
% allows you to upper bound
upper bounds
Kolmogorov complexity. 
To do this we need to generalize the coding theorem
to cover quantum samplable distributions. 
This can be done because 
% $K$ 
% is a
% time unbounded notion 
Kolmogorov complexity is a time-unbounded notion
and quantum algorithms can be simulated by a time-unbounded 
classical machine. 

% \bnote{What is really the probability of $x$? We can only estimate 
% the probability that $\cald_n$, the $n$-th slice of $\cald$, samples
% $x \in \blt^n$.}
\begin{theorem}[Coding Theorem for Quantum Samplable Distributions]
    \label{thm:coding}
    For any quantum samplable distribution $\D$, 
    and any $x \in \blt^n$,
    we have 
    % $\K(x) \leq -\log(\Pr[\D \rightarrow x]) +  O(1) + \K(\D)$,
    % $\K(x) \leq -\log(\Pr[\D_n \rightarrow x]) +  O(1) + \abs{\cald} + O(\log n).$
    % $$\K(x) \leq -\log(\Pr[\D_n \rightarrow x]) + \abs{\cald} + O(\log n) + O(1).$$
    $$\K(x) \leq -\log(\Pr[\D_n \rightarrow x]) + \abs{\cald} + O(\log n).$$
    % where $\K(\D)$ is the minimum length description of the sampling program.
    % \bnote{Perhaps suffices to replace $K(\D)$ by $\abs{\D}$, i.e., 
    %     $\K(x) \leq -\log(\Pr[\D \rightarrow x]) +  O(1) + \abs{\cald}$.
    %     Our convention is that $\abs{\cald}$ is a program already.
    % }
\end{theorem}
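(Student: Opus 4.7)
The plan is to mimic the classical coding theorem via a Shannon--Fano-style enumeration argument, exploiting the fact that the universal Turing machine defining $\K$ is time-unbounded and can therefore classically simulate the quantum sampler $\cald$ to arbitrary precision. Concretely, I will describe $x$ by pointing to its location within a short list of ``heavy'' outputs of $\cald_n$, which the decoder recovers by exhaustive simulation of the quantum circuit.

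First, I construct a decoder Turing machine $M$ which, on input $\langle \cald, n, k, i \rangle$, classically simulates the quantum circuit $\cald(1^n)$ to compute $p_y := \Pr[\cald_n \to y]$ for every $y \in \blt^n$; forms the set $T_k := \set{y \in \blt^n : p_y \geq 2^{-k}}$; and outputs the $i$-th element of $T_k$ in lexicographic order. Since $\sum_{y \in T_k} p_y \leq 1$ and each summand is at least $2^{-k}$, we get the key size bound $\abs{T_k} \leq 2^k$. Now given $x$ with $p_x > 0$: if $-\log p_x > n$ the inequality is trivial because $\K(x) \leq n + O(1)$, so I may assume $k := \lceil -\log p_x \rceil \leq n+1$. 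Then $x \in T_k$, so its rank $i_x$ in the lex order of $T_k$ fits in $k$ bits. A self-delimiting encoding of the tuple $\langle M, \cald, n, k, i_x \rangle$ has total length $\abs{\cald} + O(\log \abs{\cald}) + O(\log n) + O(\log k) + k \leq -\log p_x + \abs{\cald} + O(\log n)$, proving the theorem.

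The main care needed is to ensure that $M$ can decide the test $p_y \geq 2^{-k}$ exactly, since a priori the amplitudes involved could be irrational. For any fixed finite gate set, however, the $p_y$ are algebraic numbers of bounded degree over $\QQ$, so exact comparison is classically decidable via standard algebraic-number arithmetic; alternatively, one can perturb the threshold $2^{-k}$ to a value chosen to miss all observed $p_y$, sidestepping boundary equalities. Beyond this subtlety, the argument is routine: computing a given quantum circuit's full output distribution and enumerating $T_k$ take unbounded but finite time, which is exactly what $\K$ permits.
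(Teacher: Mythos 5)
Your proposal is correct and takes essentially the same route as the paper: a time-unbounded classical simulation of the sampler lets the decoder recover $x$ from the description of $\cald$, the length $n$, and $x$'s rank among strings of probability at least roughly $p_x$, giving $\K(x) \leq -\log p_x + \abs{\cald} + O(\log n)$. The only differences are cosmetic --- you enumerate a threshold set $T_k$ while the paper indexes into the probability-sorted list, and you handle exact computation of the $p_y$ directly (algebraic amplitudes or a perturbed threshold) where the paper cites Fortnow's lemma for computing output probabilities.
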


\begin{proof}
% Because $\D$ is quantum samplable, it is samplable by a universal quantum
% Turing machine given the program for $\D$. This will be done by running the
% universal quantum Turing Machine on $\D$ then measuring the final resulting
% state in the computational basis. We can use the standard quantum Turing
% Machines from \cite{fortnow1999complexity} whose transition amplitudes are
% all multiples of $1/5$.  Given the program for $\D$ a classical Turing
% Machine can exactly simulate the execution of the program and because the
% transition function will be applied at most $t(n)$ times, we know that
% every amplitude of the final quantum state will be a multiple of
% $1/5^{t(n)}$\footnote{If we instead used \cite{bernstein1993quantum} every
% amplitude would be a multiple of $1/\sqrt{2}^{t(n)}$}. If we arrange the
% classical states in order of the squares of their amplitudes (i.e. their
% probability of being sampled) then for all $x$ the index of $x$ in this
% ordering is no more than $\log(\Pr[\D \rightarrow x])$ as otherwise the
% total probability of all elements with probability greater than $x$ would
% be greater than $1$. An unbounded classical Turing machine can recover $x$
% from the code for $\D$ plus the index in the above ordering.  
% \bnote{We still need to describe a threshold for the probability of
%     $x$, and that pays $\log n$ bits.}

As shown in~\cite[Lemma 3.2]{fortnow1999complexity}, there exists a
(classical) algorithm that, given $x$, computes the probability $p_x$ 
that
$\cald_n$ outputs $x$.
Let $\rho$ be the code of this algorithm.
Observe that the length of $\rho$ is $\abs{\cald_n} + O(1) = \abs{\cald} +
O(\log n)$.
If we sort the strings in $\blt^n$
in decreasing order of probability,
the string $x$ will appear in the first $\ceil{1/p_x}$ elements of the list.
Thus, we can specify the index of $x$ in that list with
$\log(1/p_x) + O(1)$ bits.
Therefore, 
a universal Turing machine can recover $x$
given $\log(1/p_x) + \abs{\cald} + O(\log n)$ bits.
% which gives a Kolmogorov complexity bound of
% $\log(1/p_x) + \log n + O(1)$.
%
% Let $\alpha$ be the largest power of two less than or equal to $p_x$.
% Note that $\alpha$ can be described with at most 
% $\log n$ bits.
% Moreover, 
% % there are most $1/\alpha$ strings $y$ such that $p_y \geq \alpha$.
% the set $S_\alpha = \set{y : p_y \geq \alpha}$
% has size at most $1/\alpha \leq 2/p_x$.
% Thus,
% the index of $x$ in $S_\alpha$ can be described with at most $O(1) +
% \log(1/p_x)$ bits.
% Therefore, 
% a universal Turing machine can recover $x$
% given $\rho$, $\alpha$ and the index of $x$ in $S_\alpha$,
% which gives a Kolmogorov complexity bound of
% $\log(1/p_x) + \log n + O(1)$.
\end{proof}

% \begin{corollary}
%     For any non uniform quantum samplable distribution $\D$ which uses
%     $a(n)$ bits of advice, 
%     and for any $x \in \blt^n$,
%     % which can be computed by program $D$ using $a(n)$ bits of advice,
%     we have
%     % $K(x) \leq -\log(\Pr[\D \rightarrow x]) + O(1) + a(n).$
%     $$\K(x) \leq -\log(\Pr[\D_n\rightarrow x]) + \abs{\cald} + O(\log n) + a(n).$$
% \end{corollary}

% \subsection{Low Complexity, High Probability States are Uncommon}
\subsection{Low Complexity, High Probability Strings are Uncommon}

Via the same argument given in \cite{ilango2021hardness}, we can show that,
for any quantum samplable distribution, probability estimation on average
implies Kolmogorov complexity estimation on average. 

From the 
argument in the previous subsection,
we know that high probability outputs have low Kolmogorov complexity, and
assuming that low probability outputs have high Kolmogorov complexity can only
hurt us in the very low probability instances.
% where we sample a low probability output which also happens to be one of the very few strings
% with low Kolmogorov complexity. 
More formally, we show that, given oracle access to an algorithm that
performs probability estimation on average on a distribution $\cald$,
we can solve $\GapK$ on average on that same distribution.
\Cref{thm:compute-gapk} will follow from the lemma below.

% \begin{theorem}
%     \bnote{Previous version mentioned non-uniform. Why?}
%     For any 
%     non-uniform 
%     \bnote{Why non-uniform?}
%     quantum samplable distribution $\D = \{D_n\}$ requiring $a(n)$ bits of advice, gap $\Delta = \omega(\log(n)) + 2a(n)$, oracle $\cO$, constant $c \geq 1$, and $\epsilon$ s.t. $0 \leq \epsilon \leq 1$, such that
% 
%     $$\Pr_{x \sim D_n}[p_x/c \leq \cO(x) \leq p_x] \geq 1 - \epsilon$$
% 
%     for all $n$ in some set $N \in \mathbf{N}$.
%     
%     Then by querying $\cO(x)$ and returning ``low" when $\cO(x) \geq \alpha = 2^{-s+\Delta/2}$, and ``high" when $\cO(x) < \alpha$ one solves $\GapK[s-\Delta,s]$ on $\D$ with error at most $\epsilon + s^{\Delta/3}$.
% \end{theorem}
% \bnote{I'm writing a new version of the theorem. Remove above if you like it.} 

% \bnote{I removed the description of the algorithm from the theorem statement.} 
\begin{lemma}
    \label{lem:compute-gapk-oracle}
    Let $\cald$ be a quantum samplable distribution and 
    suppose $\Delta = \omega(\log n)$.
    % Let $\calo$ be an oracle that performs probability estimation on
    % $\cald$.
    Let $\calo$ be an oracle, $c \geq 1$ be a constant and $\eps : \bbn \to (0,1)$ 
    be such that
    \begin{equation*}
        \Pr_{x \flws \cald_n}
        [p_x/n^{-c} \leq \calo(x) \leq p_x]
        \geq 1-\eps.
    \end{equation*}
    There exists a QPT algorithm which, given oracle access to $\calo$,
    solves $\GapK[s-\Delta,s]$ on $\cald$
    with error at most $\eps + 2^{-\Delta/3}$, for any choice of $s$.
\end{lemma}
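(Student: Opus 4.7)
My plan is to decide $\GapK[s-\Delta,s]$ on $\cald$ by a simple threshold rule applied to the oracle's answer, exploiting the two-sided control that the coding theorem (\Cref{thm:coding}) and the counting bound on low-complexity strings provide. Concretely, I would design the QPT algorithm to query $\calo(x)$ once and output YES (``low complexity'') iff $\calo(x) \geq 2^{-T}$, where $T = s - 2\Delta/3 + c\log n + O(1)$ is a threshold to be chosen. The algorithm is QPT since it makes one oracle call and one numerical comparison.

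For the NO case, assume $\K(x) \geq s$. By \Cref{thm:coding}, $-\log p_x \geq \K(x) - \abs{\cald} - O(\log n) \geq s - O(\log n)$, using that $\abs{\cald}$ is a constant independent of $n$. Whenever $\calo$ does not err on $x$, we have $\calo(x) \leq p_x$, so
\begin{equation*}
    -\log \calo(x) \;\geq\; -\log p_x \;\geq\; s - O(\log n) \;>\; T,
\end{equation*}
where the last inequality uses $T = s - \Omega(\Delta)$ together with $\Delta = \omega(\log n)$. Thus the algorithm outputs NO whenever the oracle succeeds, contributing at most $\eps$ to the total error.

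For the YES case, assume $\K(x) \leq s - \Delta$. When $\calo$ does not err, $\calo(x) \geq p_x \cdot n^{-c}$, so the algorithm can only fail (output NO) if $p_x < n^c \cdot 2^{-T} = 2^{-(T - c\log n)}$. Define the ``bad'' set
\begin{equation*}
    L \;=\; \set{x \in \blt^n : \K(x) \leq s - \Delta \text{ and } p_x < 2^{-(T - c\log n)}}.
\end{equation*}
By the counting lemma for $\K$ stated after \Cref{def:GapK}, we have $\abs{L} \leq 2^{s-\Delta+1}$, and therefore
\begin{equation*}
    \Pr_{x \flws \cald_n}[x \in L] \;\leq\; \abs{L} \cdot 2^{-(T - c\log n)} \;\leq\; 2^{\,s - \Delta + 1 - T + c\log n}.
\end{equation*}
Substituting $T = s - 2\Delta/3 + c\log n + O(1)$ bounds this by $2^{-\Delta/3}$, so the YES-case error is at most $\eps + 2^{-\Delta/3}$. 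A union bound over the two cases then yields total error $\eps + 2^{-\Delta/3}$, as claimed.

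The main obstacle is simultaneously balancing all three sources of slack in the choice of $T$: the coding-theorem term costs $O(\log n)$, the oracle's multiplicative approximation costs a further $c \log n$, and the counting argument requires $T$ to be large enough that $2^{s - \Delta + 1 - T + c\log n} \leq 2^{-\Delta/3}$. These constraints form a nonempty window $s - 2\Delta/3 + c\log n + O(1) \leq T < s - O(\log n)$, which is nonempty exactly because $\Delta = \omega(\log n)$; any weaker gap assumption would leave no valid threshold. Once this calibration is handled, everything else is bookkeeping with the union bound.
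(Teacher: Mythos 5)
Your proposal is correct and follows essentially the same route as the paper: a single-query threshold rule on $\calo(x)$, with the NO case handled by the coding theorem (\Cref{thm:coding}) plus $\calo(x)\leq p_x$, and the YES case handled by the counting bound on strings of complexity at most $s-\Delta$ each having small $p_x$. The only cosmetic difference is that you fold the $c\log n$ approximation loss into the threshold ($T=s-2\Delta/3+c\log n+O(1)$) while the paper uses $\alpha=2^{-s+\Delta/2}$ and absorbs the $n^c$ factor at the end via $\Delta=\omega(\log n)$; both calibrations yield the stated $\eps+2^{-\Delta/3}$ error.
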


\begin{proof}
    Let $\alpha := 2^{-s+\Delta/2}$.
    The algorithm simply queries $\calo$ and accepts 
    if
    $\calo(x) \geq \alpha$,
    and rejects when
    $\calo(x) < \alpha$.

    We first show that the error from mischaracterizing high Kolmogorov complexity
    outputs is zero. 
    % This requires a slight modification of
    % the proof of Claim 17 of \cite{ilango2021hardness}.
    From \Cref{thm:coding} above, we get 
    that,
    if $p_x \geq \alpha$,
    then
    \begin{equation*}
        \K(x)
        \leq
        s-\Delta/2 + O(\log n)
        =
        s-\omega(\log n).
    \end{equation*}
    Therefore, for large enough $n$, 
    no string $x$ with Kolmogorov complexity
    at least $s$ 
    % is sampled with probability greater than $\alpha$.
    satisfies $p_x \geq \alpha$,
    and the algorithm never errs when the oracle doesn't.

    Furthermore, we show that the error from mischaracterizing low
    Kolmogorov
    complexity outputs is small. 
    Supposing the oracle is correct,
    we only make a mistake on inputs $x$ such that
    $\K(x) \leq s-\Delta$
    and
    % $\calo(x) < \alpha$.
    $p_x \leq n^c\alpha$.
    There are at most
    $2^{s-\Delta + 1}$ strings with Kolmogorov complexity less than $s-\Delta$. 
    % No
    % matter how many of them have probability less than $c\alpha$,
    Their
    total probability is 
    at most
    $$2^{s-\Delta + 1} \cdot n^c\alpha = 2n^c  \cdot 2^{s-\Delta - s +
    \Delta/2} = 2n^c \cdot 2^{-\Delta/2} \leq 2^{-\Delta/3}$$
    for sufficiently large $n$ since $\Omega = \omega(\log n)$.

    In conclusion, adding up the error of the oracle $\calo$, the total error of the
    algorithm is $\eps + 2^{-\Delta/3}$.
\end{proof}

\Cref{thm:compute-gapk} now follows from \Cref{lem:compute-gapk-oracle}
since, if probability estimation is easy on average on $\cald$, then
there exists for every $q \geq 1$ a quantum algorithm that satisfies the
assumption of the oracle of the lemma, with error at most $n^{-q}$.
Furthermore, the error $2^{-\Delta/3}$ is negligible because 
$\Delta = \omega(\log n)$.

\section{Breaking One-Way Puzzles with a GapK oracle}
In this section we complete the characterization by showing that being able
to estimate $\GapK$ on any quantum samplable distribution is sufficient for
breaking $\OWP$. In \cite{ilango2021hardness}, this direction is very simple
since one way functions are known to imply pseudo-random generators
\cite{HILL99}, which can be broken by estimating any meta-complexity
measure (such as $\GapK$) which distinguishes between random and non-random
strings. However, a quantum equivalent to \cite{HILL99} is not known and
neither $\OWP$ nor even $\OWSG$ are known to imply a quantum 
% pseudorandom sampler. 
pseudorandom ``generator''. 
% However 

However, we observe that, implicit in the proof of Corollary 14
of~\cite{chung2024central},
is a statement in that direction which is enough for our purposes.
They show that, if one-way puzzles exist,
then there exists a non-uniform QPT sampling algorithm $\cald$
such that, for some advice, the distribution $\cald$ is indistinguishable from
uniform, and exhibits an entropy gap.
Since the entropy of the distribution is small, we can argue that its
Kolmogorov complexity is small as well.
Moreover, the sampler crucially uses only $O(\log n)$ bits of
non-uniformity,
which means that we can employ a $\GapK$ oracle that works on a
uniformly quantum samplable distribution 
% break the security of the sampler $\cald$.
to distinguish the sampler $\cald$ from the uniform distribution,
thus breaking its security.

\begin{theorem}[\protect{\cite[Proof of Corollary 14]{chung2024central}}]
    \label{thm:nuefidef}
    If one way puzzles exist, there exists a 
    polynomial-time quantum algorithm $\cald$
    with the following properties.
    The algorithm $\cald$
    takes in
    two inputs $1^n$ and $\nu$,
    % where $\nu$
    % has length $t(n)$,
    and outputs $m(n) > n$ bits.
    Moreover, for 
    each
    sufficiently large $n$, 
    % there exists functions $\nu^*(n)$ and $d(n)$ 
    there exists a binary string $\nu^*(n)$
    such that
    \begin{enumerate}
        % \item $\nu(n) \leq t(n) = O(\log n)$
        % \item $\nu(n) \leq 2^{t(n)} = \poly(n)$
        \item $\nu^*(n) \in \blt^{t(n)}$ and $t(n) = O(\log n)$;
        \item $\D_n(\nu^*(n))$ is computationally indistinguishable from the uniform distribution on $m(n)$ bits;
        % \item $H(\D_n(\nu^*(n))) \leq m(n) - d(n)$,
        \item $H(\D_n(\nu^*(n))) \leq m(n) - n$,
    \end{enumerate}
    where $\cald_n(\nu) := \cald(1^n, \nu)$.
    % Furthermore, by parallel repetition we can assume $d(n) = n$.
\end{theorem}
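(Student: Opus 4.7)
The plan is to adapt the classical HILL--VZ construction~\cite{HILL99,VZ12} of pseudorandom generators from one-way functions to the one-way puzzle setting, following the approach of~\cite{chung2024central}. The output will not be a full pseudorandom generator (which is not currently known to follow from $\OWP$) but rather a quantum samplable distribution that is computationally indistinguishable from uniform while having noticeable Shannon entropy deficit, requiring a small amount of non-uniform advice to describe.

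First, I would establish a pseudoentropy gap from the one-way puzzle. Let $(\Samp,\Ver)$ be a $\OWP$ and consider the joint distribution $(k,s) \leftarrow \Samp(1^\lambda)$. By~\Cref{thm:dist-owpuzz}, no QPT algorithm can sample from the conditional distribution $k \mid s$ within statistical distance $\lambda^{-O(1)}$, since otherwise a $\lambda^{-c}$-secure distributional $\OWP$, and hence a $\OWP$, would not exist. Following the standard reasoning of~\cite{VZ12}, this gap between the best computational sampler and the true distribution translates into a \emph{next-block pseudoentropy} gap: viewing $(k,s)$ as a two-block distribution, the conditional pseudoentropy of $k$ given $s$ strictly exceeds the conditional Shannon entropy by an inverse polynomial amount.

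Next, I would amplify this gap by parallel repetition. Taking $\ell = \poly(n)$ independent samples from $\Samp$ yields a distribution with total pseudoentropy gap at least $\Omega(n)$, while the Shannon entropy of the concatenation remains bounded by $\ell \cdot H(\Samp(1^\lambda))$. Applying the Vadhan--Zheng extraction procedure, one hashes both the key blocks and puzzle blocks with appropriately tuned extractors to output a string of length $m(n)$ that is indistinguishable from uniform. The extractor parameters, however, depend on the precise pseudoentropy value, which is an unknown integer in $\{0,1,\ldots,\poly(n)\}$. We guess this value using only $t(n) = O(\log n)$ bits of advice $\nu$; for the correct guess $\nu^*(n)$, the output $\cald_n(\nu^*(n))$ is pseudorandom. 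Since randomness extractors do not increase Shannon entropy and the input has bounded Shannon entropy, one obtains $H(\cald_n(\nu^*(n))) \leq m(n) - n$ for an appropriate choice of parameters.

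The main obstacle is verifying that the HILL--VZ pseudoentropy machinery transfers faithfully to the quantum samplable setting. Classical arguments typically purify randomness and reason about a deterministic function of uniform random bits, but the $\OWP$ sampler is genuinely quantum and cannot be purified this way. Fortunately, since the puzzle and key outputs are \emph{classical} strings and the security reductions are black-box in the distinguisher, the core reductions go through with quantum adversaries replacing classical ones. The most delicate bookkeeping is ensuring that the pseudoentropy value can be guessed to the required accuracy with only $O(\log n)$ bits, rather than requiring super-logarithmic advice, which constrains how finely the extractor seed length must be tuned.
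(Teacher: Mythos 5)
Your proposal follows essentially the same route as the paper, which does not reprove this statement but imports it from the proof of Corollary 14 of Chung--Goldin--Gray: apply the HILL/Vadhan--Zheng next-block pseudoentropy machinery to the one-way puzzle sampler, guessing the pseudoentropy value with $O(\log n)$ bits of non-uniform advice to obtain a pseudorandom low-entropy distribution, exactly as you describe. One small nit: the fact you need at the outset is the easy direction (a QPT algorithm sampling $k$ given $s$ to small statistical distance would output a verifying key and directly break the security of $(\Samp,\Ver)$), not \Cref{thm:dist-owpuzz}, which states the converse implication.
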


% Henceforth, we will call an algorithm $\cald$ 
%that 
% satisfies the properties of the algorithm from
% \Cref{thm:nuefidef}
% a \emph{non-uniform pseudorandom low-entropy distribution}.

\begin{lemma}\label{claim:highk}
    For all $c$, $m$ and $n$,
    we have
    $$\Pr[\K(\U_m) \leq m - c\log n] \leq n^{-c}.$$
\end{lemma}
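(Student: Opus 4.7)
The plan is a direct counting argument using the lemma already recorded in the preliminaries (``the number of strings $x \in \blt^n$ such that $\K(x) \leq t$ is at most $2^{t+1}-1$''). First I would instantiate that lemma with input length $m$ and threshold $t = m - c\log n$, concluding that the set
\[
B := \set{x \in \blt^m : \K(x) \leq m - c\log n}
\]
has size at most $2^{m - c\log n + 1} - 1 < 2^{m - c\log n + 1}$.

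Next, since $\U_m$ is the uniform distribution on $\blt^m$, each string in $B$ carries mass exactly $2^{-m}$ under $\U_m$, so by a union bound
\[
\Pr[\K(\U_m) \leq m - c\log n] \;=\; \frac{\card{B}}{2^{m}} \;\leq\; 2^{1 - c\log n} \;=\; 2 n^{-c}.
\]
This is off from the stated bound only by a factor of $2$; I would absorb the constant either by reading the statement as ``at most $n^{-c}$'' with $c$ taken in place of $c+1$ (so that $2 n^{-c-1} \leq n^{-c}$ for $n \geq 2$), or by noting that the counting lemma can be tightened (there are at most $2^{t}$ strings of complexity $\leq t - 1$, which gives a cleaner $n^{-c}$).

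There is essentially no obstacle here: the only subtlety is the off-by-one in the counting lemma and the treatment of the small $n$ case where the bound is trivial (when $c \log n \geq m$, the right-hand side is $\geq 1$ and the inequality holds vacuously, and when $m$ is tiny it is still a counting estimate). The argument does not require any quantum or cryptographic machinery, since it is a purely combinatorial statement about the uniform distribution and the number of short descriptions available to a universal Turing machine.
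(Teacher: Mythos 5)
Your proposal is correct and follows essentially the same route as the paper: count the strings of Kolmogorov complexity at most $m - c\log n$ and divide by $2^m$ under the uniform distribution $\U_m$. In fact you are slightly more careful than the paper's own one-line proof, which asserts there are $2^{m-c\log n}$ machines of length at most $m-c\log n$ and thus lands exactly on $n^{-c}$, silently eliding the factor-of-two (from the $2^{t+1}-1$ count) that you explicitly identify and absorb; this constant is harmless in all downstream uses of the lemma.
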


\begin{proof}
    There are $2^{m-c\log n}$ Turing machines of length $\leq m-c\log n$,
    and so there are at most $2^{m-c\log n}$ strings with $\K(x) \leq
    m-c\log n$. 
    Therefore,
    \begin{equation*}
        \Pr[\K(\U_m) \leq m - c\log n] \leq \frac{2^{m-c\log n}}{2^m} =
        n^{-c}.
        %\qedhere
    \end{equation*}
\end{proof}

\begin{lemma}\label{claim:lowk}
    % Let $\cald$ be a non-uniform pseudorandom low-entropy distribution.
    Let $\cald$ and $\nu^*(n)$ be as in \Cref{thm:nuefidef}.
    Then
    $$\Pr[\K(D_n(\nu^*(n))) \leq m(n) - n + O(\log n)] \geq \frac{1}{m}.$$
\end{lemma}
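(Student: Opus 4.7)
}

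My plan is to combine the entropy bound from \Cref{thm:nuefidef} with the Coding Theorem (\Cref{thm:coding}) via a standard averaging argument on the self-information of a sample. Let $p_x = \Pr[\cald_n(\nu^*(n)) \to x]$. By definition of Shannon entropy,
$$\Exp_{x \flws \cald_n(\nu^*(n))}[-\log p_x] = H(\cald_n(\nu^*(n))) \leq m - n,$$
where the inequality uses property (3) of \Cref{thm:nuefidef}. Since $-\log p_x \geq 0$, a Markov-type argument (conditioning on whether $-\log p_x$ is above or below a threshold $T$ and using that it contributes at least $0$ on the small side and at least $T$ on the large side) shows that if $\Pr[-\log p_x \leq T] < 1/m$, then $\Exp[-\log p_x] > T(1 - 1/m)$. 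Setting $T = m - n + 2$ contradicts the entropy bound above for $n$ large enough, so we conclude
$$\Pr_{x \flws \cald_n(\nu^*(n))}[-\log p_x \leq m - n + 2] \geq 1/m.$$

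Next I would invoke the Coding Theorem for quantum samplable distributions (\Cref{thm:coding}) applied to the quantum sampler that on input $1^n$ runs $\cald(1^n, \nu^*(n))$. A description of this sampler consists of the code of $\cald$ together with the advice string $\nu^*(n)$; since $|\nu^*(n)| = O(\log n)$ by property (1), the total description length is $|\cald| + O(\log n) = O(\log n)$. The Coding Theorem thus yields
$$\K(x) \leq -\log p_x + O(\log n)$$
for every $x$ in the support. Combining this with the event from the previous paragraph, with probability at least $1/m$ over $x \flws \cald_n(\nu^*(n))$ we have $\K(x) \leq m - n + 2 + O(\log n) = m - n + O(\log n)$, which is exactly the claim.

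The proof is essentially mechanical once the two main ingredients are in place; the only point requiring a little care is ensuring that the non-uniform advice $\nu^*(n)$ is short enough to be absorbed into the $O(\log n)$ overhead of the Coding Theorem, which is precisely what property (1) of \Cref{thm:nuefidef} guarantees. The Markov step could alternatively be phrased directly as $\Pr[-\log p_x \geq T] \leq (m-n)/T$; choosing $T$ slightly above $m-n$ gives the desired $1/m$ lower bound on the good event.
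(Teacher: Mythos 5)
Your proof is correct and follows essentially the same route as the paper: bound $\Exp[-\log p_x]$ by $m-n$ via property (3), apply Markov's inequality to get $\Pr[-\log p_x \leq m-n+O(1)] \geq 1/m$, and then invoke \Cref{thm:coding} on the sampler $\cald(1^n,\nu^*(n))$, absorbing the $O(\log n)$-bit advice into the description length. The minor differences (threshold $m-n+2$ versus $m-n+1$, and phrasing the Markov step as a contradiction) are immaterial.
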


\begin{proof}
    In this proof we will write $\cald_n$ to denote
    $\cald_n(\nu^*(n))$ for simplicity.

    Let $p_x := \Pr[\cald_n \to x]$.
    Observe that $H(\D_n) = \E_{x\randfrom \D_n}[-\log p_x]$. 
    Since $H(\D_n) \leq m - n$, by Markov's bound we get
    $$\Pr_{x\randfrom \D_n}[-\log p_x \geq m - n + 1] \leq \frac{m - n}{m -
    n + 1} = 1 - \frac{1}{m - n + 1}.$$

    But note that $\D_n$ is samplable by a constant size Turing machine on
    input $n,\nu^*(n)$. Since $|n,\nu^*(n)| = O(\log n)$, by \Cref{thm:coding}
    and the previous inequality
    we obtain 
    % $$\Pr_{x\randfrom \D(\nu)}
    $$
    \Pr_{x\randfrom \D_n} [\K(x) \leq m - n + O(\log n)] 
    \geq
    \Pr_{x\randfrom \D_n}[-\log p_x \leq m - n + 1] 
    \geq
    \frac{1}{m-n+1} 
    \geq 
    \frac{1}{m},
    $$
    and the claim follows.
\end{proof}

\begin{theorem}[Item 1 $\implies$ Item 3]
    If one-way puzzles exist, then there exists 
    % $s = n^{\Omega(1)}$ and
    % $\Delta = \omega(\log n)$ 
    % % such that 
    % with the property that
    % there is a quantum samplable distribution
    % $\mathcal{S}_n$ 
    $s = n^{\Omega(1)}$,
    $\Delta = \omega(\log n)$ 
    and
    % such that 
    a quantum samplable distribution
    $\mathcal{S}$ 
    such that $\GapK[s,s+\Delta]$ is weakly average-case
    hard on $\mathcal{S}$.
\end{theorem}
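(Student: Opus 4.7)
The plan is to use the non-uniform pseudorandom sampler from \Cref{thm:nuefidef}, together with the two Kolmogorov-complexity bounds in \Cref{claim:highk,claim:lowk}, to build a quantum samplable distribution $\cals$ on which any $\GapK$-solver would in fact distinguish the pseudorandom distribution from uniform. Let $\cald$, $m(n)$, $t(n) = O(\log n)$, and $\nu^*(n)$ be as in \Cref{thm:nuefidef}. First I would \emph{pad} $\cald$ so that the low-$\K$ threshold becomes polynomially large: set $\cald'_n(\nu) := \cald_n(\nu) \,\|\, r$ where $r \flws \calu_{n^2}$ is fresh uniform randomness. The new sampler outputs $m'(n) := m(n) + n^2$ bits, $\cald'_n(\nu^*)$ is still computationally indistinguishable from $\calu_{m'(n)}$ (any distinguisher against the padded version gives one against the unpadded version by appending fresh randomness), and $H(\cald'_n(\nu^*)) \le m(n) - n + n^2 = m'(n) - n$. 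Then define $\cals_n$ as the equal mixture of (i) a sample from $\calu_{m'(n)}$ and (ii) a sample from $\cald'_n(\nu)$ for $\nu \flws \calu_{t(n)}$; this is uniformly QPT samplable. Pick $s := m'(n) - n + C \log n$ where $C$ is the constant in \Cref{claim:lowk} applied to $\cald'$, and $\Delta := n/2$. Then $s \ge n^2 - n = n^{\Omega(1)}$, $\Delta = \omega(\log n)$, and $s + \Delta \le m'(n) - c\log n$ for every constant $c$ and all large $n$.

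Suppose for contradiction that $\GapK[s, s+\Delta]$ is not weakly hard on $\cals$: for every $q$ there is a QPT algorithm $\cala$ erring on $\cals_n$ with probability at most $n^{-q}$ on infinitely many $n$. Fix such $\cala$ for a sufficiently large $q$ (to be chosen), and restrict attention to those $n$. Because the uniform branch of $\cals_n$ carries mass $1/2$, $\cala$ errs on $\calu_{m'(n)}$ with probability at most $2 n^{-q}$, and combining with \Cref{claim:highk} yields $\Pr[\cala(\calu_{m'(n)}) = \textsf{YES}] \le 2 n^{-q} + n^{-c}$. On the mixture branch, the expected error under $\nu \flws \calu_{t(n)}$ is at most $2 n^{-q}$, so by Markov the error conditioned on $\nu = \nu^*(n)$ is at most $2^{t(n)+1} n^{-q} = \poly(n) \cdot n^{-q}$. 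Applying \Cref{claim:lowk} to $\cald'$, a sample from $\cald'_n(\nu^*)$ satisfies $\K \le s$ with probability at least $1/m'(n) = 1/\poly(n)$, so $\Pr[\cala(\cald'_n(\nu^*)) = \textsf{YES}] \ge 1/\poly(n) - \poly(n) \cdot n^{-q}$.

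Finally, since $\cala$ is a (in particular non-uniform) QPT algorithm, computational indistinguishability of $\cald'_n(\nu^*)$ from $\calu_{m'(n)}$ forces these two \textsf{YES}-probabilities to differ by at most a negligible amount. Chaining the three estimates gives $1/\poly(n) - \poly(n) \cdot n^{-q} \le 2 n^{-q} + n^{-c} + \negl(n)$, which fails once $q$ and $c$ are chosen large enough, a contradiction. The main obstacle I anticipate is making the $2^{t(n)}$ loss from conditioning on the single advice string $\nu^*(n)$ harmless: this works exactly because \Cref{thm:nuefidef} guarantees $t(n) = O(\log n)$ and the entropy gap is a full $n$ bits, so both the polynomial blowup in error and the $1/m'(n)$ lower bound on the \textsf{YES}-probability remain inverse polynomial, and all remaining error terms can be driven below $1/\poly(n)$ by taking $q$ and $c$ sufficiently large.
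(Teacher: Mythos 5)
Your proposal is correct and follows essentially the same route as the paper: it invokes the non-uniform pseudorandom low-entropy sampler of \Cref{thm:nuefidef}, mixes it (over uniformly random advice) with the uniform distribution, and uses \Cref{claim:highk,claim:lowk} to turn any weak-on-average $\GapK$ solver into a distinguisher contradicting computational indistinguishability, with the same $2^{t(n)} = \poly(n)$ loss from conditioning on $\nu^*$. The only deviation is cosmetic: you pad with $n^2$ fresh uniform bits to force $s = n^{\Omega(1)}$ with the gap placed near $m'-n$, whereas the paper avoids padding by instead placing the gap interval higher, at $[m-n/2,\, m-2c\log n]$; both choices work.
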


\begin{proof}
    Assume one-way puzzles exist.
    Let $\cald, m$ and $\nu^*$ be 
    % the quantum sampler 
    as
    given by~\Cref{thm:nuefidef}.
    We define the following distribution $\mathcal{S}_n$:
    \begin{enumerate}
        \item Sample $b\randfrom \{0,1\}$
        \item If $b=0$, sample 
            % $x \randfrom [t(n)]$. 
            $s \randfrom \blt^{t(n)}$. 
            Output $\D_n(s)$.
        \item If $b=1$, output $r \randfrom \U_{m}$.
    \end{enumerate}

    Suppose for contradiction that, for all samplable distributions, for all $s=n^{\Omega(1)}$
    and $\Delta = \omega(\log n)$, the problem $\GapK[s,s+\Delta]$ is not weakly
    average-case hard.
    Let $c$ be such that 
    % $m,\nu\leq n^c$. 
    $m, 2^{\abs{\nu^*}} \leq n^c$. 
    Note that, since $m = n^{O(1)}$,
    we have $n = m^{\Omega(1)} = \omega(\log m)$.
    Thus, we have
    $(m-2c\log n) - (m-n/2) = \omega(\log m)$.
    Let $\mathcal{O}$ solve
    $\GapK[m-n/2,m - 2c\log n]$ with success probability $\geq 1-\frac{1}{4n^{2c}}$
    over $\mathcal{S}_n$.

    We claim that $\mathcal{O}$ breaks the security of the sampler as in~\Cref{thm:nuefidef}. 
    Observe that $\mathcal{S}_n$ conditioned on $b=0$ and $s = \nu^*$ is
    exactly $\D_n(\nu^*)$. 
    Since this condition holds with probability $\geq \frac{1}{2n^c}$, we
    have that $\mathcal{O}$ correctly distinguishes Kolmogorov complexity
    over $\D_n(\nu^*)$ with success probability $\geq 1-\frac{2n^c}{4n^{2c}}
    \geq 1 - \frac{1}{2n^c}$. Thus, by~\Cref{claim:lowk}, 
    $$
    \Pr[\mathcal{O}(\D_n(\nu^*)) \to 1] 
    % \Pr[\mathcal{O}(\D_n(\nu^*)) \to \text{Accept}] 
    \geq \frac{1}{n^c} -
    \frac{1}{2n^c} \geq \frac{1}{2n^c}.$$
    Similarly, we know that $\mathcal{O}$ correctly distinguishes
    Kolmogorov complexity over $\U_m$ with success probability 
    $\geq 1 - \frac{2}{4n^{2c}} \geq 1-\frac{1}{2n^{2c}}$. By~\Cref{claim:highk}, we
    have 
    % $$\Pr[\mathcal{O}(\U_m) \to 1] \leq \frac{1}{2n^{2c}} - \frac{1}{2^{n/2}}$$
    $$\Pr[\mathcal{O}(\U_m) \to 1] \leq \frac{1}{n^{2c}} +
    \frac{1}{2n^{2c}} = \frac{3}{2n^{2c}}.$$
    This means that $\cald_n(\nu^*)$ is not indistinguishable from uniform,
    % So we have a polynomial gap in the success probabilities,
    which contradicts 
    % the existence of one way puzzles.
    the security of the sampler $\cald$.

    % We claim that $\mathcal{O}$ breaks the security of the sampler as in~\Cref{thm:nuefidef}. Observe that $\mathcal{S}_n$ conditioned on $b=0$ and $s = \nu$ is exactly $\D_n(\nu)$. Since this condition holds with probability $\geq \frac{1}{2n^c}$, we have that $\mathcal{O}$ correctly distinguishes Kolmogorov complexity over $\D_n(\nu)$ with success probability $\geq 1-\frac{2n^c}{4n^{2c}} \geq 1 - \frac{1}{2n^c}$. Thus, by~\Cref{claim:lowk}, 
    % $$\Pr[\mathcal{O}(\D_n(\nu)) \to 1] \geq \frac{1}{n^c} - \frac{1}{2n^c} \geq \frac{1}{2n^c}$$

    % Similarly, we know that $\mathcal{O}$ correctly distinguishes Kolmogorov complexity over $\U_m$ with success probability $\geq 1 - \frac{1}{4n^{2c}} \geq 1-\frac{1}{2n^{2c}}$. By~\Cref{claim:highk}, we have that
    % $$\Pr[\mathcal{O}(\U_m) \to 1] \leq \frac{1}{2n^{2c}} - \frac{1}{2^{n/2}}$$
\end{proof}

\section*{Acknowledgements}
We thank Yanyi Liu and Angelos Pelecanos for collaborating on an earlier
stage of this project.
We thank the Simons Institute for the Theory of Computing for hosting
the Meta-Complexity program where this collaboration began.
B. Cavalar acknowledges support of
Royal Society University Research Fellowship URF$\backslash$R1$\backslash$211106. Eli Goldin is supported by a NSF Graduate Student Research Fellowship.

\bibliographystyle{alpha}
\bibliography{refs/crypto,refs/abbrev0,refs/bibliography}

% \appendix
% 
% \input{introv2}
% \input{intro}
\appendix

\end{document}